\newtheorem{corollary}{Corollary}
\newtheorem{lemma}{Lemma}
\newtheorem{theorem}{Theorem}
\title{Polylogarithmic Approximation Algorithm for $k$-Connected Directed Steiner Tree on Quasi-Bipartite Graphs}
\author[1]{Chun-Hsiang Chan\thanks{kenhchan@umich.edu}}
\author[2]{Bundit Laekhanukit\thanks{bundit@sufe.edu.cn}}
\author[3]{Hao-Ting Wei\thanks{hw2738@columbia.edu}}
\author[4]{Yuhao Zhang\thanks{yhzhang2@cs.hku.hk}}
\affil[1]{Department of Computer Science, University of Michigan}
\affil[2]{Institute for Theoretical Computer Science, Shanghai University of Finance \& Economics \vspace{1mm}}
\affil[3]{Department of IEOR, Columbia University}
\affil[4]{Department of Computer Science, The University of Hong Kong}
\def\max{{\rm max}}
\def\min{{\rm min}}
\long\def\longdelete#1{}
\newcommand{\Value}{\mathsf{Val}\xspace}
\newcommand{\LP}{\mathsf{LP}\xspace}
\newcommand{\Halo}{\mathsf{Halo}\xspace}
\newcommand{\cost}{\mathsf{cost}\xspace}
\newcommand{\opt}{\mathsf{OPT}\xspace}
\newcommand{\dist}{\sigma}
\newcommand{\E}{\mathbb{E}}
\date{}
\renewcommand{\setminus}{-}
\algrenewcommand\algorithmicrequire{\textbf{Input}}
\algrenewcommand\algorithmicensure{\textbf{Output}}
\begin{document}
\maketitle

\begin{abstract}
In the $k$-Connected Directed Steiner Tree problem ($k$-DST), we are given a directed graph $G=(V,E)$ with edge (or vertex) costs, a root vertex $r$, a set of $q$ terminals $T\subseteq V\setminus\{r\}$, and a connectivity requirement $k>0$; the goal is to find a minimum-cost subgraph $H\subseteq G$ such that $H$ has $k$ edge-disjoint paths from the root $r$ to each terminal $t\in T$.
The $k$-DST problem is a natural generalization of the classical Directed Steiner Tree problem (DST) in the fault-tolerant setting in which the solution subgraph is required to have an $r,t$-path, for every terminal $t\in T$, even after removing $k-1$ vertices or edges. 
This paper studies the $k$-DST problem when an input graph is quasi-bipartite, i.e., there is no edge joining two non-terminal vertices.

The fault-tolerant variants of DST have been actively studied in the past decades; see, e.g., [Cheriyan et al., SODA'12 \& TALG], [Laekhanukit, SODA'14], [Laekhanukit, ICALP'16], [Grandoni-Laekhanukit, STOC'18]. Despite this, for $k>2$, the positive results were known only in special cases, e.g., directed acyclic graphs when $|T|+k$ is a constant or in a $\gamma$-shallow instances for constant $\gamma$. In this paper, we make progress toward devising approximation algorithms for $k$-DST. We extend the study of DST in quasi-bipartite graphs [Hibi-Fujito, Algorithmica; Friggstad et al., SWAT'16] to the fault-tolerant setting and present a polynomial-time $O(\log k \log q)$-approximation algorithm for $k$-DST in quasi-bipartite graphs, for arbitrary $k\geq 1$. Our result is based on the Halo-Set decomposition developed by Kortsarz and Nutov [STOC'04 \& SICOMP] and further developed in subsequent works, e.g., [Fakcharoenphol-Laekhanukit, STOC'08 \& SICOMP], [Nutov, SODA'09 \& Combinatorica], [Nutov, FOCS'09 \& TALG]. The main ingredient in our work is a non-trivial reduction from the problem of covering uncrossable families of subsets to the Set Cover problem, which can be seen as the generalization of the spider decomposition method in [Klein-Rav, IPCO'93 \& JAL; Nutov, APPROX'06 \& TCS]. 
\end{abstract}
\section{Introduction}
\label{sec:intro}

Designing a network that can operate under failure conditions is an important task for Computer Networking in both theory and practice. Many models have been proposed to capture this problem, giving rise to the area of survivable and fault-tolerant network design. In the past few decades, there have been intensive studies on the survivable network design problems; see, e.g., \cite{WilliamsonGMV95,GoemansGPSTW94,Jain01,FleischerJW06,ChuzhoyK12,Nutov12,GrandoniL17}. The case of link-failure is modeled by the {\em Edge-Connectivity Survivable Network Design} problem (EC-SNDP), which is shown to admit a $2$-approximation algorithm by Jain \cite{Jain01}. The case of node-failure is modeled by the {\em Vertex-Connectivity Survivable Network Design} problem (VC-SNDP), which is shown to admit a polylogarithmic approximation algorithm by Chuzhoy and Khanna \cite{ChuzhoyK12}. Nevertheless, most of the known algorithmic results pertain to only undirected graphs, where each link has no prespecified direction. In the directed case, only a few results are known as the general case of Survivable Network Design is at least as hard as the {\em Label-Cover} problem \cite{DodisK99}, which is believed to admit no sub-polynomial approximation algorithm \cite{Moshkovitz15,BellareGLR93}.

This paper studies the special case of the Survivable Network Design problem on directed graphs, namely the {\em $k$-Connected Directed Steiner Tree} problem ($k$-DST), which is also known as the {\em Directed Root $k$-Connectivity} In this problem, we are given an $n$-vertex directed graph $G=(V,E)$ with edge-costs $c:E\rightarrow\mathbb{R}_0^+$, a root vertex $r$, a set of $q$ terminals $T\subseteq V\setminus\{r\}$ and a connectivity requirement $k\in\mathbb{Z}^+$; the goal is to find a minimum-cost subgraph $H\subseteq G$ that has $k$ edge-disjoint\footnote{We define the problem here as an edge-connectivity problem; however, in directed graphs, edge-connectivity and vertex-connectivity variants are equivalent. In addition, the edge-cost and the vertex-cost variants are also equivalent.} $r,t$-paths for every terminal $t\in T$. 
This problem was mentioned in \cite{FeldmanKN09} and have been subsequently studied in \cite{CheriyanLNV14,Laekhanukit14,ChalermsookGL15,Laekhanukit16,GrandoniL17}. The only known non-trivial approximation algorithms for $k$-DST are for the case $k=2$ due to the work of Grandoni and Laekhanukit \cite{GrandoniL17}, and for the case of $\gamma$-shallow instances due to the work of Laekhanukit \cite{Laekhanukit16}. To the best of our knowledge, for $k \geq 3$, there were only a couple of positive results on $k$-DST: (1) Laekhanukit \cite{Laekhanukit16} devised an approximation algorithm whose the running-time and approximation ratios depend on the diameter of the optimal solution, and (2) Chalermsook, Grandoni and Laekhanukit \cite{ChalermsookGL15} devised a bi-criteria approximation algorithm for a special case of $k$-DST, namely the {\em $k$-Edge-Connected Group Steiner Tree} ($k$-GST), where the solution subgraph is guaranteed to be an $O(\log^2n\log k)$-approximate solution, whereas the connectivity is only guaranteed to be at least $\Omega(k/\log n)$.
Our focus is the case of $k$-DST where an input graph is {\em quasi-bipartite}, i.e., there is no edge joining any pair of non-terminal (Steiner) vertices, which generalizes the works of Hibi-Fujito \cite{HibiF16} and Friggstad-K\"{o}nemann-Shadravan \cite{FriggstadKS16} for the classical {\em directed Steiner tree} problem (the case $k=1$).

The main contribution of this paper is an $O(\log q\log k)$-approximation algorithm for $k$-DST on quasi-bipartite graphs, which runs in polynomial-time regardless of the structure of the optimal solution. Our result can be considered the first true polylogarithmic approximation algorithm whose running time is independent of the structure (i.e., diameter) of the optimal solution, albeit the algorithm is restricted to the class of quasi-bipartite graphs. Our technique is completely different from all the previous works \cite{GrandoniL17,Laekhanukit16,ChalermsookGL15}; all these results rely on the tree-rounding algorithm for the {\em Group Steiner Tree} problem by Garg, Konjevod and Ravi \cite{GargKR00}, and thus require either an LP whose support is a tree or a tree-embedding technique (e.g., R\"{a}cke's decomposition \cite{Racke08} as used in \cite{ChalermsookGL15}). Our algorithm, on the other hand, employs the {\em Halo-Set decomposition} devised by Kortsarz and Nutov \cite{KortsarzN05} and further developed in a series of works \cite{FakcharoenpholL12,CheriyanL13,Nutov12,Nutov14,Laekhanukit15-subset,Nutov12-subset}.
It is worth noting that the families of subsets decomposed from our algorithm are not uncrossable. We circumvent this difficulty by reducing the problem of {\em covering uncrossable families} to the {\em Set Cover} problem. Our algorithm can be seen as a generalization of the {\em spider decomposition} method developed by Klein-Ravi~\cite{KleinR95} and Nutov~\cite{Nutov10}.

Lastly, we remark that it was discussed in \cite{GrandoniL17} that the tree-embedding approach reaches the barrier as soon as $k>2$, and this holds even for quasi-bipartite graphs. Please see \Cref{app:bad-example} for discussions. While our algorithm exploits the structure of quasi-bipartite graphs, we hope that our technique using the Halo-Set decomposition would be an alternative method that sheds some light in developing approximation algorithms for the general case of $k$-DST for $k>2$.

\subsection{Related Works}
\label{sec:intro:related}

Directed Steiner tree has been a central problem in combinatorial and optimization. There have been a series of work studying this problem; see, e.g., \cite{Zelikovsky97,CharikarCCDGGL99,Rothovoss11,FriggstadKKLST14,GrandoniLL19,GhugeN18}. The best approximation ratio of $O(q^{\epsilon})$, for any $\epsilon<0$, in the regime of polynomial-time algorithms is known in the early work of Charikar~et~al.~\cite{CharikarCCDGGL99}\footnote{The same result can be obtained by applying the algorithm by Peleg and Kortsarz in \cite{KortsarzP97}}, which leads to an $O(\log^3 q)$-approximation algorithm that runs in quasi-polynomial-time. Very recently, Gradoni, Laekhanukit and Li \cite{GrandoniLL19} developed a framework that gives a quasi-polynomial-time $O(\log^2 q/\log\log q)$-approximation algorithm for the Directed Steiner Tree problem, and this approximation ratio is the best possible for quasi-polynomial-time algorithms, assuming the {\em Projection Games Conjecture} and $\mathrm{NP}\subseteq\bigcup_{\delta>0}\mathrm{ZPTIME}(2^{n^{\delta}})$.
The same approximation ratio was obtained in an independent work of Ghuge and Nagarajan  \cite{GhugeN18}.

The study of Steiner tree problems on quasi-bipartite graphs was initiated by Rajagopalan and Vazirani \cite{RajagopalanV99} in order to understand the bidirected-cut relaxation of the (undirected) {\em Steiner tree} problem. Since then the special case of quasi-bipartite graphs has played a central role in studying the Steiner tree problem; see, e.g., \cite{Rizzi03,ChakrabartyDV11,RobinsZ00,KonemannPT11,ByrkaGRS13,GoemansORZ12}.
For the case of directed graphs, Hibi-Fujito \cite{HibiF16} and Friggstad-K\"{o}nemann-Shadravan \cite{FriggstadKS16} independently discovered $O(\log n)$-approximation algorithms for the directed Steiner tree problem on quasi-bipartite graphs. Assuming $\mathrm{P}\neq\mathrm{NP}$, this matches to the lower bound of $(1-\epsilon)\ln n$, for any $\epsilon>0$, inherited from the Set Cover problem \cite{Feige98,DinurS14}. 

The generalization of the Steiner tree problem is known as the Survivable Network Design problem, which has been studied in both edge-connectivity \cite{WilliamsonGMV95,GoemansGPSTW94,Jain01}, vertex-connectivity \cite{ChuzhoyK12} and element-connectivity \cite{FleischerJW06} settings. The edge and element connectivity Survivable Network Design problems admit factor $2$ approximation algorithms via the iterative rounding method, while the vertex-connectivity variant admits no polylogarithmic approximation algorithm \cite{KortsarzKL04,ChakrabortyCK08,Laekhanukit14} unless $\mathrm{NP}\subseteq\mathrm{DTIME}(n^{\mathrm{polylog}(n)})$. To date, the best approximation ratio known for the Vertex-Connectivity Survivable Network problem is $O(k^3\log n)$ due to the work of Chuzhoy and Khanna \cite{ChuzhoyK12}.

In vertex-connectivity network design, one of the most common technique is the Halo-Set decomposition method, which has been developed in a series of works \cite{KortsarzN05,FakcharoenpholL12,CheriyanL13,Nutov14}. The main idea is to use the number of minimal deficient sets as a notion of progress. Here a deficient set is a subset of vertices that needs at least one incoming edge to satisfy the connectivity requirement. The minimal deficient sets in \cite{KortsarzN05,FakcharoenpholL12,CheriyanL13,Nutov14}, called cores, are independent and have only polynomial number, while the total number of deficient sets is exponential on the number of vertices. The families of deficient sets defined by these cores allow us to keep track of how many deficient sets remain in a solution subgraph. The early version of this method can be traced back to the seminal result of Frank \cite{Frank99a} and that of Frank and Jordan \cite{FrankJ99}; please see \cite{FrankJ2016graph} for reference therein.

 The spider decomposition method was introduced by Klein and Ravi \cite{KleinR95} to handle the {\em Vertex-Weighted Steiner Tree} problem. This technique gives a tight approximation result (up to constant factor) to the problem. Later, Nutov generalized the technique to deal with the {\em Minimum Power-Cover} problems \cite{Nutov10} and subsequently for the {\em Vertex-Weighted Element-Connectivity Survivable Network Design} problem \cite{Nutov12}.

\subsection {Technical Difficulties}
\label{sec:tech-difficult}

As mentioned, our algorithm relies much on the combination of the known techniques. However, due to structural differences, there are quite a few obstacles in adapting these techniques in our settings. We discuss in this section the structural differences, which might help the readers in studying $k$-DST.

\begin{itemize}
    \item {\bf No Tree Decomposition.} Firstly, as we mentioned the tree-embedding technique is not available for us when $k \geq 2$. This is due to a bad example for the case $k=3$ that shows an existence of a quasi-bipartite graph that cannot be decomposed into {\em $k$-divergent Steiner trees}. To be formal, the $k$-divergent Steiner tree is a collection of $k$ trees such that whenever we fix one terminal and pick $r,t$-paths, one from each tree, these $k$ paths are edge-disjoint. Such a collection of trees does not exist for $k\geq 3$ even in quasi-bipartite-graphs. (Please see more details in \Cref{app:bad-example}.) Thus, we completely rule out the possibility of using this approach.
    
    \item {\bf Non-Uncrossable Families of Deficient Sets.} Secondly, the Halo-Set decomposition method does not work directly for us. This is because the previous applications of the Halo-Set decomposition requires the families of deficient sets to be uncrossable. That is, one must be able to decompose the deficient sets  (i.e., a subset of vertices that needs at least one incoming edge to satisfy the connectivity requirement) into families, in which any two members from different families are disjoint. This is not the case for us, and the absent of this property has been the biggest obstacle in obtaining any non-trivial result for $k$-DST. Although, as we will show in \Cref{sec:structures}, there are many structures that resemblance those in the previous works, we have to proceed with uncrossable families of deficient sets.
    
    \item {\bf Combinatorial Greedy Algorithm is Not Available.} Thirdly, the previous application of the spider decomposition method \cite{KleinR95,Nutov10} requires the decomposition of an optimal solution into a collection of spiders \cite{KleinR95} or stars \cite{Nutov10}. Shortly, the spider decomposition method decomposes an optimal solution into a collection of spiders (resp., stars), which defines an instance of the Set Cover problem. Thus, an application of the ``combinatorial greedy algorithm'' for the set cover problem almost immediately gives a factor $O(\log n)$ approximation algorithm for the vertex-weighted Steiner tree problem \cite{KleinR95} and a factor $O(k\log n)$ approximation algorithm for the Minimum Power Cover problems in \cite{Nutov10}.
    
    The combinatorial greedy algorithm has an advantage that even though an instance of the Set Cover problem has exponential number of sets. It can run on a compact representation of an instance; see, e.g., \cite{Nutov10}.
    However, as we will discuss later, our algorithm is based on the {\em connectivity augmentation} framework, which requires an LP-based approximation algorithm. Thus, we need to decompose a ``fractional'' optimal solution for $k$-DST, which introduces some complication into our proof (even though we tried our best to keep the proof simple).
    
    \item {\bf Spider Decomposition Consisting of Disconnected Components.} In addition, while our technique is a generalization of  the spider decomposition method, each component we have to deal with (which is supposed to be a spider) is not connected and may contain directed cycles. This causes a slight complication and makes our decomposition departs from the previous two applications of the spider decomposition method \cite{KleinR95,Nutov10}.
\end{itemize}

\subsection{Our Result}
\label{sec:intro:result}

The main result in our paper is an $O(\log q\log k)$-approximation algorithm for $k$-DST on quasi-bipartite graphs. Since our algorithm is LP-based, it also gives an upper bound on the integrality gap of the standard LP.

\begin{theorem}
\label{thm:main-theorem}
Consider the $k$-Connected Directed Steiner Tree problem where an input graph consists of an $n$-vertex quasi-bipartite graph and a set of $q$ terminals.
There exists a randomized polynomial-time $O(\log q\log k)$-approximation algorithm. 
Moreover, the algorithm gives an upper bound on the integrality gap of $O(\log q\log k)$ for the standard cut-based LP-relaxation of the problem.
\end{theorem}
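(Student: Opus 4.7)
The plan is to combine the connectivity-augmentation framework with an LP-based rounding procedure that exploits the quasi-bipartite structure. We would maintain a subgraph $H_i$ that, for every terminal, contains roughly $2^i$ edge-disjoint $r,t$-paths, starting from $H_0$ obtained by any existing $O(\log n)$-approximation for $1$-DST on quasi-bipartite graphs \cite{HibiF16,FriggstadKS16}. In phase $i$, we augment the connectivity from $2^i$ to $2^{i+1}$ by solving an augmentation subproblem with an $O(\log q)$-approximation, losing only a constant factor relative to the optimal $k$-DST cost, since $\opt$ itself is a feasible augmentation in every phase. After $O(\log k)$ phases we reach connectivity $k$, giving the claimed $O(\log q \log k)$-approximation. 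The integrality-gap bound follows because the rounding procedure compares its cost to the value of the cut-based LP in each phase.

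For a single augmentation phase, I would write the standard cut-based LP relaxation restricted to the cuts of $H_i$ whose connectivity is below the new threshold. Then, applying the Halo-Set decomposition of Kortsarz and Nutov \cite{KortsarzN05} to the deficient cuts, I would extract a polynomially-bounded collection of \emph{cores} (minimal deficient sets) together with their associated halo families. In the quasi-bipartite setting, every augmenting edge either enters a Steiner vertex (from the root or a terminal) or leaves a Steiner vertex (to a terminal), so it is natural to aggregate the edges incident to a common Steiner vertex into a \emph{star}, which plays the role of a spider in Klein--Ravi \cite{KleinR95} and Nutov \cite{Nutov10}.

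The main step is a reduction of the problem of covering the halo families to a Set Cover instance whose elements are the cores and whose sets are the stars centered at Steiner vertices, together with direct edges among terminals. Using the LP solution of the augmentation subproblem, I would produce a fractional set-cover of the cores whose total cost is within a constant factor of the LP optimum; randomized rounding (used as a substitute for the combinatorial greedy algorithm, which is not available in our setting) then yields an integral cover of expected cost $O(\log q)$ times the LP value. Combining this with the outer augmentation loop produces the claimed guarantee.

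The hardest part, as stressed in \Cref{sec:tech-difficult}, is the Set Cover reduction itself. Because the halo families here are \emph{not} uncrossable, the classical Kortsarz--Nutov argument that each star covers a neatly-structured subset of cores does not apply; we must argue that stars centered at Steiner vertices, even when they consist of several disconnected components and contain directed cycles, still cover the cores in a way compatible with a set-cover formulation and preserve the LP-value bound. A secondary technical burden is that the input to the set-cover reduction is a fractional LP solution rather than an integral optimum, so the construction of the set-cover instance and the analysis of its fractional cost require extra bookkeeping compared with the combinatorial settings of \cite{KleinR95,Nutov10}.
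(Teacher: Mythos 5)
Your proposal has the right overall shape (connectivity augmentation plus the Halo-set decomposition plus a Set Cover reduction exploiting quasi-bipartiteness), but there are two concrete gaps that would keep it from yielding $O(\log q\log k)$. The outer loop is wrong: the Halo-set decomposition only produces unit-deficient sets (sets with in-degree exactly $\ell$), so the only available subroutine augments rooted connectivity by \emph{one} per step; there is no ``augment from $2^i$ to $2^{i+1}$ in one shot'' subroutine with an $O(\log q)$ guarantee. If you instead unroll a doubling phase into $2^i$ unit augmentations, the claim that ``$\opt$ is feasible, so each phase loses only a constant factor'' fails for the last phase (from roughly $k/2$ to $k$), which already accumulates $\Omega(\log k)$ unit-augmentation optima. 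What the paper actually uses, and what you would also need, is the LP-scaling inequality $\Value(\LP^{aug}(\ell)) \leq \Value(\LP(k))/(k-\ell)$, so that the sum over all $k$ unit steps telescopes to $O(\log k)\cdot\Value(\LP(k))$; without it the doubling framing does not give the bound.

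The inner step is also off by a logarithmic factor. The Set Cover instance in the paper is not the ``stars centered at Steiner vertices'' construction: sets are indexed by individual edges $e\in E_+$, and a core $C$ is placed in $S_e$ exactly when the completion cost $\dist^e_C$ (the cheapest set of edges internal to $H(C)$ finishing the cover of $\Halo(C)$ given $e$ for free) is at most the LP mass $\cost_x(E[C])$ inside $H(C)$. This threshold rule, together with the total dual integrality of the covering LP for a single intersecting Halo-family and the unique-entering-edge lemma, is what makes the scaled LP solution feasible for the Set Cover instance; the star construction has no analogous ingredient, and you correctly note but do not resolve the ``compatibility'' obligation. More importantly, even if you could cover all Halo-families at $O(\log q)$ times the LP value, that still would not finish one unit augmentation: deficient sets containing two or more cores lie in no Halo-family, so the cores-decrease lemma forces $O(\log q)$ further iterations, giving $O(\log^2 q)$ per unit step and $O(\log^2 q\log k)$ overall. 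The paper sidesteps this, and remarks on it explicitly, by covering only a constant fraction ($1/9$) of the Halo-families per round at constant (factor $4$) cost relative to the LP, then iterating the cores-decrease lemma $O(\log q)$ times to get $O(\log q)$ per unit augmentation.
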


We also present a derandomization of our algorithm using the method of conditional expectation, which preserves the performance guarantee.
\begin{theorem}
\label{thm:main-theorem-det}
Consider the $k$-Connected Directed Steiner Tree problem where an input graph consists of an $n$-vertex quasi-bipartite graph and a set of $q$ terminals.
There exists a {\em deterministic} polynomial-time $O(\log q\log k)$-approximation algorithm.
\end{theorem}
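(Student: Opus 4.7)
The plan is to derandomize the algorithm of Theorem 1 via the method of conditional expectation, applied to each random choice made by the rounding procedure without altering the underlying analysis. The algorithm of Theorem 1 is naturally organized as $O(\log k)$ outer connectivity-augmentation phases; within each phase, the Halo-Set decomposition yields a covering instance that is solved by $O(\log q)$ rounds of LP-based randomized rounding, where each round independently selects an augmenting component (the generalized spider of Section~\ref{sec:tech-difficult}) with probability proportional to an optimal fractional LP solution $x^*$. All randomness used by the algorithm is thus confined to these inner rounding rounds.

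I would first introduce a pessimistic estimator $\Phi$, standard in the derandomization of set-cover-type rounding, equal to the cost of all augmenting components committed so far plus $\sum_{C \in \mathcal{C}^{\mathrm{unc}}} \beta_C$, where $\mathcal{C}^{\mathrm{unc}}$ is the collection of still-uncovered cores produced by the current Halo-Set decomposition and each $\beta_C$ is the closed-form upper bound on the conditional expected residual cost to cover $C$ that is already implicit in the randomized analysis of Theorem 1. By that analysis, the initial value $\mathbb{E}[\Phi]$ at the start of a phase is $O(\log q) \cdot \cost(x^*)$; more importantly, $\Phi$ is a valid pessimistic estimator, meaning that the expected final cost is bounded by the current value of $\Phi$ at every intermediate time.

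The derandomization then proceeds round by round. For every candidate component $S$ in the polynomial-size family supplied by the decomposition, the quantity $\mathbb{E}[\Phi \mid S \text{ chosen}]$ is polynomial-time computable: committing $S$ contributes its explicit cost, marks every core $C$ that becomes covered, and updates the remaining $\beta_C$ via the residual LP in closed form. By the averaging principle, some candidate $S$ satisfies $\mathbb{E}[\Phi \mid S \text{ chosen}] \le \mathbb{E}[\Phi]$, and selecting it deterministically preserves the invariant that $\Phi$ upper-bounds the expected cost of any completion. Iterating across all $O(\log q)$ inner rounds within each of the $O(\log k)$ outer phases yields a deterministic algorithm whose output cost is at most the initial value of $\Phi$, which is $O(\log q \log k) \cdot \opt$.

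The main obstacle will be verifying that this pessimistic estimator is in fact polynomial-time computable in our setting. Concretely, this reduces to evaluating, at each intermediate state, the residual covering LP over the still-uncovered cores. This is tractable because the Halo-Set decomposition produces only a polynomial number of cores and because, in quasi-bipartite graphs, the family of candidate augmenting components admits a compact polynomial-size representation (each component is determined by a small set of Steiner vertices together with their terminal neighborhoods). Once these two ingredients are checked, the standard method-of-conditional-expectation argument applies verbatim and the $O(\log q \log k)$ approximation ratio of Theorem~\ref{thm:main-theorem} is preserved without loss.
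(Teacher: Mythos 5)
Your high-level plan — derandomize the randomized rounding by the method of conditional expectation with a pessimistic estimator — matches the paper's strategy, but both the shape of the estimator and the structure of the rounding you describe differ from what the paper actually does, and a couple of your details don't match the algorithm.

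First, the randomness in the paper's algorithm is \emph{not} a sequence of $O(\log q)$ rounds each selecting an ``augmenting component.'' In each iteration of \Cref{algo-1} the algorithm does a single one-shot rounding: every edge $e\in E_+$ is picked independently with probability $x_e$, and then for each picked $e$ and each core $C\in S_e$ the (deterministic) edge set $I^e_C$ is added. The $O(\log q)$ factor comes from \Cref{cor:num-iterations}, i.e.\ from needing $O(\log q)$ iterations of \Cref{algo-1} to drive the number of cores to zero, not from $O(\log q)$ rounding rounds. Consequently the conditional expectation is taken edge by edge over the Bernoulli choices $\{x_e\}$, not component by component.

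Second, the paper exploits a decomposition of the cost that makes the estimator much simpler than your $\sum_C\beta_C$. The second stage cost (the $I^e_C$ sets) is bounded \emph{deterministically} by $\sum_{e\in E_+}c_ex_e$, using the construction of $S_e$ together with \Cref{lem:internal-edge-disjoint}, and is therefore independent of the coin flips. So the only thing that needs to be controlled is the first-stage edge cost together with the event that at least a $1/9$ fraction of Halo-families is outer-covered. The paper's potential is thus simply $\Phi(E')=\sum_{e\in E'}c_e+M\cdot\mathbb{I}(E')$ with $M=3\sum_{e\in E_+}c_ex_e$ and $\mathbb{I}$ the indicator of under-coverage; its conditional expectations are directly computable from the independent inclusion probabilities, with no residual covering LP to evaluate. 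Your $\beta_C$ (``closed-form upper bound on the conditional expected residual cost to cover $C$'') is left undefined, and making it polynomial-time computable is exactly the burden the paper avoids by this simpler choice. If you want your version to go through you would need to specify $\beta_C$ and prove it is a valid pessimistic estimator with efficiently computable conditional expectations; the paper's global-indicator potential sidesteps that entirely. Also, the polynomial-size candidate family is simply $E_+$ itself (edges, not ``small sets of Steiner vertices plus terminal neighborhoods''), so no compact-representation argument is needed.
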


\section{Preliminaries}
\label{sec:prelim}

We use standard graph terminologies. Given a graph $G$, we denote by $V(G)$ and $E(G)$ the vertex set and the edge set of $G$, respectively.
For any subset of vertices $U\subseteq V(G)$, we denote by $\delta^{in}_G(U)$ the set of edges in $G$ entering the set $U$ and denote by $\deg^{in}_G(U)$ its cardinality. We denote by $E_G(U)$ the set of edges that have both head and tail in $U$. That is, 
\begin{align*}
\delta^{in}_G(U) &= \{wv\in E(G): v\in U, w\not\in U\}, 
\quad \deg^{in}_G(U) = |\delta^{in}_G(U)|,\quad \text{and}\\
E_G(U) &= \{vw\in E(G): v,w\in U\}.
\end{align*}
We will omit the subscript $G$ if the graph $G$ is known in the context, and we may replace $E_G$ with another edge-set, e.g., $E_{+}$.
For any subset of edges $E'$, we denote the total cost of edges in $E'$ by $\cost(E')=\sum_{e\in E'}c_e$.

%


\subsection{Problem Definitions}
\label{sec:prelim:problem}

\paragraph*{$k$-Edge-Connected Directed Steiner Tree ($k$-DST).}
In the {\em $k$-Edge-Connected Directed Steiner Tree} problem (k-DST), we are given a graph $G$ with non-negative edge-costs $c : E\longrightarrow \mathbb{R}_0^+$, a root vertex $r$ and a set of $q$ terminals $T\subseteq (V(G)\setminus\{r\})$, and the goal is to find a minimum-cost subgraph $H\subseteq G$ such that $H$ has $k$ edge-disjoint $r\rightarrow{t}$-paths for every terminal $t\in T$. 

\paragraph*{Rooted Connectivity Augmentation (Rooted-Aug).}
In Rooted-Aug, we are given a graph $G$ with the edge-set $E(G)=E_0\cup E_+$, where $E_0$ is the set of zero-cost edges and $E_+$ is the set of positive-cost edges, a root vertex $r$ and a set of terminals $T\subseteq V(G)\setminus{r}$ such that $E_0$ induces a subgraph $G_0\subseteq G$ that has $\ell$ edge-disjoint $r\rightarrow{t}$-paths for every terminal $t\in T$. The goal in this problem is to find a minimum-cost subset of edges $E'\subseteq E_+$ such that $E_0\cup E'$ induces a subgraph $H\subseteq G$ that has $\ell+1$ edge-disjoint $r\rightarrow{t}$-paths for every terminal $t\in T$.

We may phrase Rooted-Aug as a problem of covering deficient sets as follows.
We say that a subset of vertices $U\subseteq V(G)$ is a {\em deficient set} if $U$ separates the root vertex $r$ and some terminal $t\in T$, but $U$ has less than $\ell+1$ incoming edges (which means that $U$ has exactly $\ell$ incoming edges); that is, $U$ is a deficient set if $r\not\in U$, $U\cap T\not\emptyset$ and $\deg^{in}_{G_0}(U) = \ell$. These subsets of vertices need at least one incoming edge to satisfy the connectivity requirement.
We say that an edge $e\in E_+$ {\em covers} a deficient set $U$ if $\deg^{in}_{E_0\cup \{e\}}(U) \geq \ell$, which means that adding $e$ to $G_0$ satisfies the connectivity requirement on $U$.

Let $\mathcal{F}$ denote the set of all deficient sets in the graph $G_0$.
Then Rooted-Aug may be phrased as the problem of finding a minimum-cost subset of edges $E'\subseteq E_+$ that covers all the deficient sets, which can be described by the following optimization problem:
$$
\min\{E'\subseteq E_+:\deg^{in}_{E'}(U)\geq 1\ \forall U\in\mathcal{F}\}.
$$

\paragraph*{Set Cover.}
Given a universe $\mathcal{U}$ of $n$ elements and a collection of $m$ subsets $S_1,\ldots,S_m\subseteq\mathcal{U}$, each associated with weight $w_j$, for $j=1,\ldots,m$, the goal in the Set Cover problem is to find a collection $\mathcal{S}^*$ of subsets with minimum total weights so that the union of all subsets in $\mathcal{S}^*$ is equal to $\mathcal{U}$.

\subsection{Deficient Sets, Cores and Halo-families}
\label{sec:prelim:deficient-sets}

This section discusses subsets of vertices called deficient sets that certify that the current solution subgraph in Rooted-Aug (and also in $k$-DST) does not meet the connectivity requirement.
To be formal, a subset of vertices $U\subseteq V(G)$ is called a {\em deficient set} in the graph $G$ if $T\cap U\neq\emptyset$, $r\not\in U$ and $\deg^{in}_G(U)<k$; that is, $(V(G)\setminus U,U)$ induces an edge-cut of size $<k$ that separates some terminal $t\in U\cap T$ from the root vertex $r$.
We say that an edge $vw\not\in E(G)$ {\em covers} a deficient set $U$ if $\deg^{in}_{G+vw}(U)\geq k$, i.e., the set $U$ is not a deficient set after adding the edge $vw$.
Similarly, we say that a subset of edges $E'$ covers a deficient set or a collection of deficient sets $\mathcal{F}$ if $\deg^{in}_{G+E'}(U) \geq k$, for every deficient set $U\in\mathcal{F}$.

Let $\mathcal{F}$ be a family of deficient sets.
A {\em core} $C\in \mathcal{F}$ is a deficient set such that there is no deficient set in $\mathcal{F}$ properly contained in $C$.
The {\em Halo-family} $\Halo(C)$ of a core $C$ is a collection of deficient sets in $\mathcal{F}$ that contain $C$ and contains no other core $C'\neq C$.
The {\em Halo-set} of $C$ is the union of all the deficient sets in $\Halo(C)$, i.e., $H(C) = \bigcup_{U\in\Halo(C)}U$.

\subsection{LP-relaxations}
\label{sec:prelim:LP}
 
 Throughout this paper, we will use the following standard (cut-based) LP-relaxation for $k$-DST and the Rooted-Aug. Our LP-relaxations will be written in terms of deficient sets.
 We denote by $\Value(z)$ the cost of the optimal solution to an LP $z$. 
 
 \paragraph*{LP for $k$-DST:}

Here we present the standard cut-based LP-relaxation for $k$-DST, denoted by $\LP(k)$. The collection of deficient sets in this LP is defined by $\mathcal{F}(k) = \{U\subseteq V\setminus\{r\}: U\cap T\neq\emptyset\}$.

$$
\LP(k) = \left\{
\begin{array}{lll}
\min & \sum_{e\in E}c_ex_e\\
\text{s.t.} \\
 & \sum_{e\in\delta^{in}_G(U)}x_e \geq k & \forall U\in\mathcal{F}(k)\\
 & 0 \leq x_e \leq 1 & \forall e\in E(G)
\end{array}
\right.
$$

\paragraph*{LP for Rooted-Connectivity Augmentation:}

Here we assume that the initial graph $G_0$ is already $\ell$-rooted-connected, and the goal is to add edges to increase the connectivity of the solution subgraph by one. Thus, the collection of deficient sets in this problem is defined by 
$\mathcal{F}(\ell) = \{U\subseteq V: U\cap T\neq\emptyset, \deg_{G_0}^{in}(U)| = \ell\}$.
Below is the standard cut-based LP-relaxation for the problem of increasing the rooted-connectivity of a graph by one.

\[
\LP^{aug}(\ell) = \left\{
\begin{array}{lll}
\min & \sum_{E(G)\setminus E(G_0)}c_ex_e\\
\text{s.t.} \\
 & \sum_{e\in\delta^{in}_{E(G)\setminus E(G_0)}(U)}x_e \geq 1 & \forall U\in\mathcal{F}(\ell)\\
 & 0 \leq x_e \leq 1 & \forall e\in E(G)\setminus E(G_0)
\end{array}
\right.
\]

\section{Properties of Deficient Sets in Rooted Connectivity Augmentation}
\label{sec:structures}

This section presents the basic properties of deficient sets, cores and Halo-families in a Rooted-Aug instance, which will be used in the analysis of our algorithm. Readers who are familiar with these properties may skip this section. Similar lemmas and proofs can be seen, e.g., in \cite{CheriyanL13}. 
Our proofs are rather standard. The readers who are familiar with these properties may skip to the next section. 

The first property is the uncrossing lemma for deficient sets of Rooted-Aug.

\begin{restatable}[Uncrossing Properties]{lemma}{lemuncross}
\label{lem:uncrossing}
Consider an instance of Rooted-Aug. Let $G_0$ be a rooted $\ell$-connected graph, and let $A,B$ be deficient sets in $G_0$ that have a common terminal, i.e., $A\cap B\cap T\neq\emptyset$. Then both $A\cup B$ and $A\cap B$ are deficient sets.
\end{restatable}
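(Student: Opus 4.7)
The plan is to combine the standard submodularity of directed in-degree with the rooted $\ell$-connectivity of $G_0$. Recall the routine edge-counting identity
$$\deg^{in}(A) + \deg^{in}(B) \;=\; \deg^{in}(A\cap B) + \deg^{in}(A\cup B) + (\text{edges with one endpoint in } A\setminus B \text{ and the other in } B\setminus A),$$
which in particular yields the submodular inequality
$$\deg^{in}_{G_0}(A) + \deg^{in}_{G_0}(B) \;\geq\; \deg^{in}_{G_0}(A\cap B) + \deg^{in}_{G_0}(A\cup B).$$
This submodularity is the key ingredient; everything else is bookkeeping.

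First I would check that $A\cup B$ and $A\cap B$ are eligible to be deficient sets in the sense of the definition (i.e., they exclude $r$ and intersect $T$). Since $r\notin A$ and $r\notin B$, clearly $r\notin A\cup B\supseteq A\cap B$. By hypothesis $A\cap B\cap T$ contains some terminal $t$, and this $t$ also lies in $A\cup B$, so both sets meet $T$.

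Second, I would invoke the rooted $\ell$-connectivity of $G_0$ together with Menger's theorem: any subset $U$ with $r\notin U$ and $U\cap T\neq\emptyset$ must satisfy $\deg^{in}_{G_0}(U)\geq \ell$, since the $\ell$ edge-disjoint $r\to t$-paths to any $t\in U\cap T$ must each cross the cut into $U$. Applying this to both $A\cap B$ and $A\cup B$ gives $\deg^{in}_{G_0}(A\cap B)\geq \ell$ and $\deg^{in}_{G_0}(A\cup B)\geq \ell$.

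Third, since $A$ and $B$ are deficient we have $\deg^{in}_{G_0}(A)=\deg^{in}_{G_0}(B)=\ell$, so the submodular inequality chains into
$$2\ell \;=\; \deg^{in}_{G_0}(A) + \deg^{in}_{G_0}(B) \;\geq\; \deg^{in}_{G_0}(A\cap B) + \deg^{in}_{G_0}(A\cup B) \;\geq\; 2\ell.$$
Equality must therefore hold throughout, forcing $\deg^{in}_{G_0}(A\cap B)=\deg^{in}_{G_0}(A\cup B)=\ell$. Combined with the eligibility check from the first step, this shows both $A\cap B$ and $A\cup B$ are deficient, completing the proof. There is no real obstacle here beyond correctly stating the submodular inequality for directed in-degree; the only mild subtlety is the use of the common-terminal hypothesis to ensure that $A\cap B$ indeed meets $T$ (without which $A\cap B$ would not even be a candidate deficient set).
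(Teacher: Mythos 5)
Your proof is correct and follows essentially the same route as the paper's: both rely on submodularity of the directed in-degree function together with Menger's theorem (which gives the lower bound $\deg^{in}_{G_0}(U)\geq\ell$ for any $U$ separating $r$ from a terminal), and both close the sandwich $2\ell \geq \deg^{in}(A\cap B)+\deg^{in}(A\cup B)\geq 2\ell$ to force equality. The only difference is that you spell out the eligibility check ($r\notin A\cup B$ and $A\cap B\cap T\neq\emptyset$) a bit more explicitly than the paper does, which is harmless and arguably cleaner.
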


\begin{proof}
We prove the lemma by using Menger's theorem and the submodularity of the indegree function $\deg^{in}$.
First, since $G_0$ is rooted $\ell$-connected, we know from Menger's Theorem that $\deg^{in}(A)$ and $\deg^{in}(B)= \ell$. We also know that  $\deg^{in}(A\cup B)\geq \ell$ and $\deg^{in}(A\cap B)\geq \ell$ because the root $r$ is not contained in either $A$ or $B$ and that $A\cap B\cap T\neq \emptyset$.
By the submodularity of $\deg^{in}$, it holds that 
$$2\ell=\deg^{in}(A) + \deg^{in}(B) \geq \deg^{in}(A\cup B) + \deg^{in}(A\cap B)\geq 2\ell.$$
Therefore,  $\deg^{in}(A\cup B) = \deg^{in}(A\cap B)= \ell$, implying that both $A\cup B$ and $A\cap B$ are deficient sets in the Rooted-Aug instance.
\end{proof}

The next lemma gives an important property of the cores arose from deficient sets in directed graphs; that is, two cores may have non-empty intersection on Steiner vertices, but they are disjoint on terminal vertices.

\begin{restatable}[Members of Two Halo-families are Terminal Disjoint]{lemma}{lemterminaldisjoint}
\label{lem:terminal-disjoint}
Let $C$ and $C'$ be two distinct cores. Then, for any deficient sets $U\in \Halo(C)$ and $U'\in\Halo(C)$, it holds that $U\cap U'\cap T=\emptyset$, i.e., any members of two distinct Halo-families have no common terminals.
\end{restatable}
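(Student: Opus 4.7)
The plan is to argue by contradiction via the uncrossing lemma (\Cref{lem:uncrossing}), which is available to us as an assumed result. Suppose toward contradiction that there exist $U \in \Halo(C)$ and $U' \in \Halo(C')$ with $C \neq C'$ but $U \cap U' \cap T \neq \emptyset$. Then $U$ and $U'$ are deficient sets sharing a terminal, so the hypotheses of the uncrossing lemma are met, and we conclude that $U \cap U'$ is itself a deficient set in $G_0$.

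Next I would extract a core living inside this intersection. Since the family of deficient sets is finite and is partially ordered by inclusion, every deficient set must contain at least one minimal deficient set, i.e., a core. Apply this to $U \cap U'$ to obtain a core $C'' \subseteq U \cap U'$. Now the definition of $\Halo(C)$ says that $U$ contains $C$ and no other core; since $C'' \subseteq U \cap U' \subseteq U$ is a core contained in $U$, we must have $C'' = C$. By the symmetric argument applied to $U' \in \Halo(C')$, we also get $C'' = C'$. Hence $C = C'$, contradicting the assumption that the two cores are distinct. This yields $U \cap U' \cap T = \emptyset$.

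The only step that requires a moment of care is the claim that every deficient set contains a core; this is immediate from finiteness (take any inclusion-minimal deficient subset of $U \cap U'$, which exists and is by definition a core), but it is the one place where the definitions of core and Halo-family interact with the uncrossing conclusion. I do not expect any real obstacle here, since the structure of the argument — uncross to get a smaller deficient set, pull out its core, and force equality with both $C$ and $C'$ — is precisely the standard pattern used in the prior works cited in the text (e.g., \cite{CheriyanL13}), and quasi-bipartiteness plays no role in this particular statement.
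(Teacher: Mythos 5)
Your proof is correct and runs along the same lines as the paper's: both hinge on uncrossing $U$ and $U'$ to get the deficient set $U\cap U'$ and then deriving a contradiction from what core(s) $U\cap U'$ contains. The difference is purely in presentation. The paper first passes to an inclusion-minimal pair $(U,U')$ sharing the terminal $t$, then observes that $U\cap U'$ is a deficient set properly contained in both (properness follows because $C\neq C'$, so neither of $U,U'$ can contain the other), and concludes by minimality; the core-extraction step is left implicit in the minimality bookkeeping. You instead skip the minimality reduction entirely and directly pull an inclusion-minimal deficient subset — a core — $C''\subseteq U\cap U'$, then force $C''=C$ (since $C''\subseteq U$ and $U\in\Halo(C)$) and $C''=C'$ (symmetrically), contradicting $C\neq C'$. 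Your version is arguably the cleaner of the two: it makes explicit the "every deficient set contains a core" step that the paper's minimality argument relies on but does not spell out, and it avoids any worry about whether $U\cap U'$ actually belongs to either Halo-family. No gap; the finiteness justification for extracting a core is exactly what's needed.
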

\begin{proof}
We prove the lemma by contradiction.
Let $U$ and $U'$ be deficient sets $U\in \Halo(C)$ and $U'\in\Halo(C)$ such that $U$ and $U'$ share a terminal $t \in U\cap U' \cap T$. 
We may assume that $U$ and $U'$ are minimal such sets, i.e., there are no deficient sets $W\in \Halo(C)$ and $W'\in\Halo(C)$ such that (1) $W$ is properly contained in $U$, (2) $W'$ is properly contained in $U'$ and (3) $t\in W\cap W'$.
By \Cref{lem:uncrossing}, $U\cap U'$ must be a deficient set properly contained in both $U$ and $U'$ (because $C\neq C')$.
This contradicts the minimality of $U$ and $U'$.
\end{proof}

The next lemma shows that both the union and the intersection of any two deficient sets in a Halo-family $\Halo(C)$ are also deficient sets in $\Halo(C)$. This is a crucial property for computing the halo-set $H(C)$ as we are unable to list all the deficient sets in a Halo-family.

\begin{restatable}[Union and Intersection of Halo-Family Members]{lemma}{lemunionisdeficient}
\label{lem:union-is-deficient}
Let $\mathcal{F}$ be a family of all deficient sets in $G_0$,
and let $C$ be any core w.r.t. $\mathcal{F}$.
Then, for any two deficient sets $A,B\in\Halo(C)$, both $A\cap B$ and $A\cup B$ are also deficient sets in $\Halo(C)$.
\end{restatable}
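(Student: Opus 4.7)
My plan is to verify three things for each of $A\cap B$ and $A\cup B$: (i) it is a deficient set, (ii) it contains $C$, and (iii) it contains no other core. Membership of $C$ is immediate in both cases because $C\subseteq A$ and $C\subseteq B$. The deficiency part will follow from the uncrossing lemma (\Cref{lem:uncrossing}). The substantive work lies in showing that no other core is contained in $A\cup B$; the corresponding statement for $A\cap B$ is essentially free.

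For step (i), I would first observe that $C$ is a deficient set and therefore contains at least one terminal, so $C\subseteq A\cap B$ gives $(A\cap B)\cap T\neq\emptyset$. Invoking \Cref{lem:uncrossing} on $A$ and $B$, which share this common terminal, yields that both $A\cap B$ and $A\cup B$ are deficient. Step (ii) is immediate from $C\subseteq A$ and $C\subseteq B$.

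The main step is (iii). For $A\cap B$ it is easy: if some core $C'\subseteq A\cap B$, then $C'\subseteq A$, and since $A\in\Halo(C)$ forbids containing any core other than $C$, we get $C'=C$. For $A\cup B$, I plan to argue by contradiction. Suppose there were a core $C'\neq C$ with $C'\subseteq A\cup B$. Choose a terminal $t'\in C'\cap T$ (nonempty since $C'$ is deficient). Then $t'\in A$ or $t'\in B$; without loss of generality $t'\in A$. Now apply \Cref{lem:uncrossing} to the deficient sets $C'$ and $A$, which share the terminal $t'$: the intersection $C'\cap A$ is a deficient set. Since $C'$ is a core, it has no deficient set properly contained in it, so $C'\cap A=C'$, i.e.\ $C'\subseteq A$. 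But then $A\in\Halo(C)$ forces $C'=C$, contradicting our assumption. Hence $A\cup B$ contains no core other than $C$.

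The subtle point that I expect to be the crux is the $A\cup B$ case: the naive attempt to argue directly would fail because $C'\subseteq A\cup B$ does not imply $C'\subseteq A$ or $C'\subseteq B$. The trick is to pass through a single terminal of $C'$, apply uncrossing with whichever of $A,B$ contains it, and then exploit the minimality of the core $C'$ to upgrade a partial containment into full containment. Once this is done, the Halo-family property of $A$ closes the argument. No additional facts beyond \Cref{lem:uncrossing} and the definitions of cores and Halo-families are needed.
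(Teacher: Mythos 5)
Your proof is correct and follows essentially the same route as the paper's: uncrossing gives deficiency, containment of $C$ is trivial, and the crux (showing $A\cup B$ contains no core $C'\neq C$) is handled exactly as the paper does, by picking a terminal of $C'$ lying in one of $A,B$, uncrossing $C'$ with that set, and then using core minimality together with the Halo-family property of $A$ to reach a contradiction. The paper phrases the final contradiction slightly differently (deriving $A\cap C'\subsetneq C'$ deficient, contradicting that $C'$ is a core) but this is just the contrapositive of your step, so the two arguments are logically identical.
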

\begin{proof}
Consider any deficient sets $A,B\in\Halo(C)$.
Since both $A$ and $B$ contain $C$, they share at least one terminal.
Thus, \Cref{lem:uncrossing} implies that both $A\cup B$ and $A\cap B$ are deficient sets. 
Clearly, $A\cap B$ contains $C$ and no other core $C'\neq C$.
Thus, $A\cap B$ is a member of $\Halo(C)$.

Next consider $A\cup B$.
Assume for a contradiction that $A\cup B$ is not a member of $\Halo(C)$.
Then $A\cup B$ must contain a core $C'\neq C$.
This means that at least one of the sets, say $A$, contains some terminal $t\in C'$.
By \Cref{lem:uncrossing}, since $A$ and $C'$ have a common terminal, it holds that $A\cap C'$ is a deficient set.
Since $C'\subsetneq A$ (because $A$ is a member of $\Halo(C)$), we have that $A\cap C'$ is a deficient set that is strictly contained in $C'$, a contradiction.
\end{proof}

It follows as a corollary that $H(C)=\bigcup_{U\in\Halo(C)}U$ is a also deficient set in $\Halo(C)$.

\begin{corollary}[Halo-set is deficient]
\label{cor:haloset-is-deficient}
Let $\mathcal{F}$ be a family of all deficient sets in $G_0$,
and let $C$ be any core w.r.t. $\mathcal{F}$.
Then the Halo-set $H(C)=\bigcup_{U\in\Halo(C)}U$ is also a deficient set in $\Halo(C)$.
\end{corollary}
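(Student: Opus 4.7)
The plan is to obtain the corollary as an essentially immediate consequence of the union half of \Cref{lem:union-is-deficient}, by iterating the binary union property across all members of $\Halo(C)$. The only subtlety is that $\Halo(C)$ is a priori a finite (but potentially large) collection, so I would phrase the argument as an induction on the number of sets being unioned, rather than as a single invocation of the lemma.

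Concretely, I would first observe that $\Halo(C)$ is a finite family, since each of its members is a subset of the finite vertex set $V(G_0)$, so we may enumerate $\Halo(C)=\{U_1,U_2,\ldots,U_m\}$. Define $W_1=U_1$ and $W_i=W_{i-1}\cup U_i$ for $i=2,\ldots,m$. I claim by induction on $i$ that $W_i\in\Halo(C)$. The base case $i=1$ is immediate. For the inductive step, $W_{i-1}\in\Halo(C)$ and $U_i\in\Halo(C)$, so \Cref{lem:union-is-deficient} applied to the pair $(W_{i-1},U_i)$ gives $W_i=W_{i-1}\cup U_i\in\Halo(C)$. Taking $i=m$ yields $W_m=\bigcup_{U\in\Halo(C)}U = H(C)\in\Halo(C)$, and in particular $H(C)$ is a deficient set.

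There is essentially no obstacle here: the heavy lifting has already been done in \Cref{lem:uncrossing} and \Cref{lem:union-is-deficient}. The only thing one might worry about is whether $\Halo(C)$ is nonempty, but this is guaranteed because the core $C$ itself lies in $\Halo(C)$ (it contains itself and contains no other core, by minimality of cores and the assumption that $C$ is a core). Thus the induction is well-posed and the corollary follows in a few lines.
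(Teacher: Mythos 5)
Your proof is correct and is essentially the same argument the paper intends: the paper states the corollary as an immediate consequence of \Cref{lem:union-is-deficient} without spelling out the finite induction, and your write-up simply makes that induction (and the nonemptiness of $\Halo(C)$) explicit.
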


\Cref{cor:haloset-is-deficient} implies that $H(C)$ can be computed in polynomial-time using an efficient maximum-flow algorithm. Such an algorithm can be seen in \cite{CheriyanL13}.

\begin{corollary}
For any core $C$, its Halo-set $H(C)=\bigcup_{U\in\mathcal{\Halo(C)}}U$ of a core $C$ can be computed in polynomial-time.
\end{corollary}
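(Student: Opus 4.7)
The plan is to compute $H(C)$ via a single maximum-flow call in $G_0$, as previewed by the discussion following \Cref{cor:haloset-is-deficient}. By that corollary, $H(C) \in \Halo(C)$, and by \Cref{lem:union-is-deficient} it is in fact the unique inclusion-maximal element of $\Halo(C)$. Thus $H(C)$ is characterized as the largest $U \subseteq V(G_0)$ satisfying (i) $C \subseteq U$, (ii) $r \notin U$, (iii) $\deg^{in}_{G_0}(U) = \ell$, and (iv) no core $C' \neq C$ is contained in $U$. By \Cref{lem:terminal-disjoint}, two distinct cores are terminal-disjoint, so condition (iv) can be enforced by fixing one representative terminal $t_{C'} \in C' \cap T$ for each core $C' \neq C$ and demanding $t_{C'} \notin U$.

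Given such representatives, I would set $R := \{r\} \cup \{t_{C'} : C' \neq C\}$ and compute a maximum $R$-to-$C$ flow in $G_0$ (contracting $R$ into a single source and $C$ into a single sink). The max-flow value equals exactly $\ell$: the lower bound follows from Menger's theorem applied to the rooted $\ell$-connected graph $G_0$ (any set containing a terminal of $C$ has indegree at least $\ell$), and the upper bound follows from the cut $(V \setminus C, C)$, which has value $\deg^{in}_{G_0}(C) = \ell$ and is a valid $R$-to-$C$ cut since $R \cap C = \emptyset$ by terminal-disjointness of cores. I would then output $U^{\star}$, the largest sink side among all minimum $R$-to-$C$ cuts, obtained in the usual way as the complement of the set of vertices reachable from $R$ in the residual graph of the computed flow.

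To conclude, I would verify $U^{\star} = H(C)$. The set $U^{\star}$ has indegree $\ell$, contains $C$, avoids $r$, and avoids every $t_{C'}$, hence $U^{\star} \in \Halo(C)$ and so $U^{\star} \subseteq H(C)$. Conversely, every $U \in \Halo(C)$ is a deficient set of indegree $\ell$ containing $C$ and disjoint from $R$, so the partition $(V \setminus U, U)$ is a minimum $R$-to-$C$ cut; the standard residual-graph construction produces the inclusion-maximal such sink side, yielding $U \subseteq U^{\star}$ and therefore $H(C) \subseteq U^{\star}$. The only real work beyond the max-flow call is preparing the list of cores, but since cores are pairwise terminal-disjoint by \Cref{lem:terminal-disjoint} there are at most $q$ of them, and each can be found as the minimum $r$-to-$t$ cut for some terminal $t$; after that, computing each $H(C)$ costs one polynomial-time max-flow.
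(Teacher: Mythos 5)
Your proposal is correct and follows the same max-flow approach that the paper gestures at (it simply defers to \cite{CheriyanL13} for the algorithm). Your construction — using \Cref{cor:haloset-is-deficient} and \Cref{lem:union-is-deficient} to identify $H(C)$ as the unique inclusion-maximal member of $\Halo(C)$, enforcing exclusion of other cores via representative terminals justified by \Cref{lem:terminal-disjoint}, and then extracting the maximal sink side of a minimum $R$-to-$C$ cut from the residual graph — is a correct and complete realization of what the paper leaves to the reference.
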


\section{Our Algorithm and Its Overview}
\label{sec:overview}

This section provides the overview of our algorithm, which is based on the {\em connectivity augmentation} framework plus the Halo-set decomposition method. To be specific, our algorithm starts with an empty graph called $H_0=(V,\emptyset)$. Then we add edges from $G$ to the graph $H_0$ to form a graph $H_1$ that has at least one path from the root vertex $r$ to each terminal $t\in T$. We keep repeating the process, which produces graphs $H_2, \ldots, H_k$ such that each graph $H_{\ell}$, for $\ell\in[k]$, has $\ell$ edge-disjoint $r,t$-paths for every terminal $t\in S$. In each iteration $\ell\in[k]$, we increase the rooted-connectivity of a graph by one using the Halo-set decomposition method.

We discuss the connectivity augmentation framework in \Cref{sec:overview:augmentation} and discuss the algorithm based on the Halo-set decomposition method for Rooted-Aug in \Cref{sec:overview:halo-set-method}. We devote \Cref{sec:covering-halo-families} to present a key subroutine for solving the the problem of {\em covering Halo-families} via a reduction to the Set Cover problem.

\subsection{Connectivity Augmentation Framework}
\label{sec:overview:augmentation}

A straightforward analysis of the connectivity augmentation framework incurs a factor $k$ in the approximation ratio.
Nevertheless, provided that the approximation algorithm for Rooted-Aug is based on the standard LP for $k$-DST, the cost incurred by this framework is only $\sum_{\ell=1}^{k}1/(k-\ell+1)=O(\log{k})$. This is known as the {\em LP-scaling technique}, which has been used many times in literature; see, e.g., \cite{GoemansGPSTW94,KortsarzN05,CheriyanLNV14}. 

\begin{restatable}[LP-Scaling]{lemma}{lemLPscaling}
\label{lem:LP-scaling}
Consider an instance of the $k$-DST problem, and its corresponding LP, namely $\LP(k)$.
Suppose there exists an algorithm that produces an integer solution to $\LP^{aug}(\ell)$ with costs at most $\alpha_{\ell}\cdot \Value(\LP^{aug}(\ell))$.
Then there exists an $\sum_{\ell=1}^k\alpha_{\ell}/(k-\ell+1)=O(\alpha\log{k})$ approximation algorithm for $k$-DST, where $\alpha=\max_{\ell=1}^k\alpha_{\ell}$.
\end{restatable}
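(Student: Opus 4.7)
The plan is a standard LP-scaling argument layered over the $k$ augmentation rounds. Let $x^{*}$ denote an optimal fractional solution to $\LP(k)$, so that $\cost(x^{*})=\Value(\LP(k))\leq\opt$. I would start from $H_{0}=(V,\emptyset)$, which is trivially rooted $0$-connected, and for each $\ell=1,\ldots,k$ invoke the hypothesised $\alpha_{\ell}$-approximation augmentation algorithm with $H_{\ell-1}$ as its base graph, producing a rooted $\ell$-connected graph $H_{\ell}$ whose new edges come from $E(G)\setminus E(H_{\ell-1})$. Then $H_{k}$ is a feasible $k$-DST solution by construction, so it only remains to account for the total cost.

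The crux will be the following scaling claim: at the $\ell$-th step, the rescaled vector $x^{*}/(k-\ell+1)$, restricted to $E(G)\setminus E(H_{\ell-1})$, is feasible for the augmentation LP of that step. I would prove it by fixing an arbitrary deficient set $U$ at that level, so $r\notin U$, $U\cap T\neq\emptyset$, and $\deg^{in}_{H_{\ell-1}}(U)=\ell-1$ exactly, and then observing that feasibility of $x^{*}$ for $\LP(k)$ gives $\sum_{e\in\delta^{in}_{G}(U)}x^{*}_{e}\geq k$, while the box constraints $x^{*}_{e}\leq 1$ cap the contribution of the exactly $\ell-1$ already-installed edges entering $U$ by $\ell-1$. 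Subtracting shows that the $x^{*}$-mass on edges of $\delta^{in}_{G}(U)\setminus E(H_{\ell-1})$ is at least $k-\ell+1$, so dividing by this quantity produces a feasible LP solution of cost at most $\cost(x^{*})/(k-\ell+1)\leq\opt/(k-\ell+1)$.

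By hypothesis the algorithm then installs edges of cost at most $\alpha_{\ell}/(k-\ell+1)\cdot\opt$ at step $\ell$. Summing over $\ell=1,\ldots,k$ and reindexing $j=k-\ell+1$ converts the denominators into a harmonic sum, yielding a total cost of at most $\alpha\sum_{j=1}^{k}\tfrac{1}{j}\cdot\opt=O(\alpha\log k)\cdot\opt$, where $\alpha=\max_{\ell}\alpha_{\ell}$. I do not anticipate a genuine obstacle; the argument is essentially bookkeeping once the scaling claim is in place. The only subtle point to respect is that both the per-edge cap $x^{*}_{e}\leq 1$ and the \emph{exact} degree equality built into the definition of $\mathcal{F}(\cdot)$ are needed to extract the $k-\ell+1$ denominator; weakening either would destroy the telescoping and leave only the trivial $O(k)$ bound that the LP-scaling technique is designed to beat.
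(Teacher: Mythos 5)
Your proof is correct and is essentially the paper's own LP-scaling argument, with one small (and arguably better) variation. The paper defines the feasible augmentation-LP vector by scaling an \emph{integral} optimal $k$-DST solution $H^{*}$: it puts mass $1/(k-\ell)$ on each edge of $E(H^{*})\setminus E(G_{0})$ and uses Menger's theorem to count the $\geq k-\ell$ edges of $H^{*}\setminus G_{0}$ entering a deficient set. You instead scale the \emph{fractional} optimum $x^{*}$ of $\LP(k)$ by $1/(k-\ell+1)$ and use the box constraints $x^{*}_{e}\leq 1$ (together with the exact degree equality $\deg^{in}_{H_{\ell-1}}(U)=\ell-1$) to extract the leftover mass of $k-\ell+1$ on new edges. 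The two proofs have the same skeleton and identical harmonic-sum bookkeeping, but your choice is the cleaner one to state: scaling $x^{*}$ bounds the round-$\ell$ cost directly by $\Value(\LP(k))/(k-\ell+1)$ rather than by $\opt_{k}/(k-\ell)$, which is exactly what is needed to support the integrality-gap conclusion of \Cref{thm:main-theorem}; the paper's proof as literally written scales an integral solution yet asserts a bound in terms of $\Value(\LP(k))$, so it implicitly needs the fractional variant you wrote anyway.
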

\begin{proof}
Let $G$ be the input graph in the $k$-DST instance. 
Let $H^*$ be an optimal integral solution to $k$-DST (and thus $\LP(k)$), and let $G_0\subseteq G$ be the initial solution subgraph of Rooted-Aug where we wish to increase the connectivity of $G_0$ from $\ell$ to $\ell+1$ by adding edges from $E(G)\setminus E(G_0)$. Then we can define the following LP solution $\{x_e\}_{e\in E(G)}$ to $\LP^{aug}(\ell)$:
$$
x_e = \left\{
\begin{array}{ll}
\frac{1}{k-\ell} & \text{ if $e\in E(H^*)\setminus E(G_0)$} \\ 
0                  & \text{otherwise}.
\end{array}
\right.
$$
Let $\mathcal{F}$ be the family of deficient sets in the Rooted-Aug instance. Then we know by Menger's theorem that any deficient set $U\in\mathcal{F}$ has at least $k$ incoming edges in $H^*$, and at most $\ell$ of them are in $G_0$ (because $\deg_{G_0}^{in}(U)=\ell$ by the definition of the deficient set). Consequently, we have 
$$
\sum_{e\in\delta_{E(G)\setminus E(G_0)}^{in}(U)}x_e \geq (k-\ell)\cdot \frac{1}{k-\ell} = 1.
$$
This means that $\{x_e\}_{e\in E(G)\setminus E(G_0)}$ is a feasible solution to $\LP^{aug}(\ell)$ whose cost is at most $(1/(k-\ell))\Value(\LP(k))$. The lemma then follows by taking the summation over all $\ell=0,1,\ldots,k-1$.
\end{proof}

\subsection{Algorithm for Rooted-Aug via Halo-set Decomposition}
\label{sec:overview:halo-set-method}

The algorithm for rooted-connectivity augmentation is built on the Halo-set Decomposition framework. In detail, we decompose vertices in the graph $G_0$ into a collection of subsets of vertices, each is defined by a Halo-family $\Halo(C)$, which is in turn defined by its core $C$. Then we add edges to cover all the deficients that are contained in any of these families. However, the collection of Halo-families does not include all the deficient sets in the graph because a deficient that contain two distinct cores are not recorgnized by any Halo-families. Thus, after we cover all these Halo-families (i.e., we add edges covering all its members), we need to recompute the deficient sets remaining in the graph and form the system of Halo-families again.

Following the above method, our algorithm runs in multiple iterations. In each iteration, we first compute all the cores and ther corresponding Halo-set in the current solution subgraph, which can be done in polynomial time. (We recall that it is not possible to compute a Halo-family explicitly because it may contain exponential number of deficient sets.) These cores define a collection of Halo-families. Our goal is then to find a subset of edges $E'$ that covers Halo-families in this collection. To be formal, by {\em covering a Halo-family}, we mean that we find a subset of edges that covers every deficient set in its family.
Here our algorithm departs from the previous application of the Halo-set decomposition as we are not aiming to cover all the Halo-families. 
We cover only a constant fraction of Halo-families from the collection, which is sufficient for our purposes. 
Once we found the subset of edges $E'$, we add it to the solution subgraph and recompute the cores and their Halo-sets. 

To find a set of edges $E'$, we need to compute an optimal solution  to the LP for augmentation the connectivity of a graph from $\ell$ to $\ell+1$ (i.e., $\LP^{aug}(\ell)$), denoted by $\{x_e\}_{e\in E_+}$, where $E_+$ is the set of edges not in initial solution subgraph $H_{\ell}$, which is $\ell$-rooted-connected.
Using this LP-solution, we can find a set of edges $E'$ that covers at least $1/9$ fraction of the collection of Halo-families whose cost is at most $4\sum_{e\in E_+}c_ex_e$ via a reduction to the Set Cover problem. This subroutine is presented in \Cref{sec:covering-halo-families}.
Note that the mentioned subroutine is a randomized algorithm that has a constant success probabilty; thus, we may need to run the algorithm for $O(\log n)$ times to guarantee that it successes with high probability. The derandomization of our subroutine is presented in \Cref{sec:derandomize}.
Our algorithm for the rooted-connectivity augmentation is presented in \Cref{algo-1}.

\begin{algorithm}[!ht]
	\begin{algorithmic}[1]
		\caption{Rooted-Connectivity Augmentation}\label{algo-1}
		\Require: An input graph $G$ and an $\ell$-rooted-connected graph $H_{\ell}$ 
		\Ensure: An $(\ell+1)$-rooted-connected graph $H_{\ell+1}$
		\State Initialize $H_{\ell+1} := H_{\ell}$.
		\Repeat
		\State Find an optimal solution ${\bf x}$ to $\LP^{aug}(\ell)$.
		\State Compute cores and their corresponding Halo-sets in $H_{\ell+1}$. \label{algo-1:loop}
		\State Find a subset of edges $E'$ that covers at least $1/9$ fraction of the Halo-families whose cost is at most $4\sum_{e\in E_+}c_ex_e$. 
		\State Update $H_{\ell+1} := H_{\ell+1} + E'$. 
		\Until The graph $H_{\ell+1}$ has no deficient set (and thus has no core).
		\State \Return $H_{\ell+1}$
	\end{algorithmic}
\end{algorithm}

One may observe that the covering problem in our setting is different from that in the usual Set Cover problem as after we add edges to cover $\gamma$ fraction of the Halo-families, it is not guaranteed that the number of Halo-families will be decreased by a factor $\gamma$ as some of the deficient sets in the previous iterations may become new cores in the solution subgraph. Fortunately, we have a key property that any new core that was not contained in any Halo-families must contain at least two old cores. As a result, we can promise a factor $(1-\gamma/2)$ decrease. Please see Figure~\ref{fig:core_merge} for illustration. The subsets $C_1$ and $C_2$ are two cores covered by $e_1$ and $e_2$, respectively. After adding two edges, $C_1$ and $C_2$ are no longer a deficient set. Now the deficient set $C_3 \supseteq C_1 \bigcup C_2$ becomes a new core, which contains two old cores.
\begin{figure}[ht]
\centering
\includegraphics[scale=0.6]{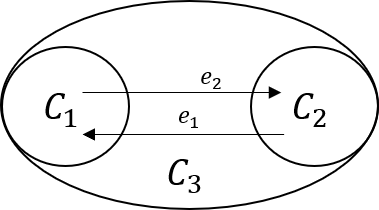}
\caption{After adding edges $e_1$ and $e_2$ to cover $C_1$, $C_2$, a new core $C_3$ appear. The new core $C_3$ must contain at least two old cores.}
\label{fig:core_merge}
\end{figure}

%
%
%

\begin{lemma}[The number of cores decreases by a constant factor] \label{lem:cores-decrease}
Let $H$ be the current solution subgraph whose number of cores is $\nu$,
and let $E'$ be a set of edges that covers at least $\gamma$ fraction of the Halo-families in $H$.
Then the number of cores in $H\cup E'$ is at most $(1-\gamma/2)\nu$.
In particular, the number of cores in the graph $H\cup E'$ decreases by a constant factor, provided that $\gamma$ is a constant.
\end{lemma}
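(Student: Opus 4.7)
The plan is to classify each core of $H \cup E'$ according to how many old cores (cores of $H$) it contains, and then bound each class separately. A preliminary observation is that every core $C'$ of $H \cup E'$ is also a deficient set in $H$, because adding edges only raises $\deg^{in}$, while the conditions $r \notin C'$ and $C' \cap T \neq \emptyset$ do not depend on the edge set. Consequently $C'$ must contain at least one minimal deficient set of $H$, i.e., at least one old core. I would partition the cores of $H \cup E'$ into \emph{thin} cores (containing exactly one old core) and \emph{fat} cores (containing at least two).

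Next I would establish the key disjointness property: no old core is contained in two distinct new cores. Suppose for contradiction that an old core $C$ sits inside two new cores $C'_1 \neq C'_2$. Then $C \subseteq C'_1 \cap C'_2$, and since $C$ is deficient in $H$ it contains a terminal, so $C'_1$ and $C'_2$ share a terminal. Viewing $C'_1, C'_2$ as deficient sets in $H \cup E'$ (which is still rooted $\ell$-connected, as edges only add), \Cref{lem:uncrossing} implies that $C'_1 \cap C'_2$ is itself a deficient set in $H \cup E'$, strictly contained in both $C'_1$ and $C'_2$. This contradicts their minimality as cores.

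Then I would bound the thin cores using the covering property. Let $C'$ be a thin new core with unique old core $C$. Then $C'$ is a deficient set in $H$ that contains $C$ and no other old core, so $C' \in \Halo(C)$. Had $E'$ covered $\Halo(C)$, then by definition every member of this family, in particular $C'$, would have in-degree at least $\ell+1$ in $H \cup E'$, so $C'$ would no longer be deficient in $H \cup E'$ and could not be a core there. Hence the unique old core of any thin new core belongs to an uncovered Halo-family, of which there are at most $(1-\gamma)\nu$.

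Finally I would combine the bounds. Writing $a$ for the number of thin cores and $b$ for the number of fat cores, the disjointness property (each old core is used by at most one new core) gives $a + 2b \leq \nu$, while the covering argument gives $a \leq (1-\gamma)\nu$. Adding and simplifying yields $a + b \leq (a + \nu)/2 \leq (1 - \gamma/2)\nu$, which is exactly the claimed bound. The main obstacle I anticipate is pinning down the disjointness step cleanly; everything else is a direct application of the uncrossing machinery from \Cref{sec:structures} together with the definition of covering a Halo-family.
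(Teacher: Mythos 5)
Your proof is correct and follows essentially the same route as the paper: partition the new cores by the number of old cores they contain, use the uncrossing lemma (\Cref{lem:uncrossing}) and terminal-disjointness (\Cref{lem:terminal-disjoint}) to establish the needed disjointness and membership-in-$\Halo$ facts, and observe that a thin new core witnesses an \emph{uncovered} Halo-family. Where you differ is the final bookkeeping: the paper argues that cores containing two or more old cores consume only covered Halo-families and then asserts directly that there are at most $(\gamma/2)\nu$ of them, which as stated is a slight imprecision (the true bound is half the number of covered families, which can exceed $(\gamma/2)\nu$). Your version, with the explicit pair of inequalities $a+2b\le\nu$ (from the disjointness of old cores across new cores) and $a\le(1-\gamma)\nu$ (thin cores point to uncovered families), avoids that slip entirely and gives $a+b\le(1-\gamma/2)\nu$ by a one-line manipulation. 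So the two proofs share the same decomposition and structural lemmas, but your counting step is the cleaner of the two.
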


\begin{proof}
	Let us count the number of cores in the graph $H\cup E'$.
	Consider any core $C$ in $H\cup E'$.
	If $C$ is a member of some Halo-families $\Halo(C')$ in $H$, then we know that $\Halo(C')$ is not covered by $E'$.
	Thus, there can be at most $(1-\gamma)\nu$ cores of this type.
	
	Next assume, otherwise, that $C$ is not a member of any Halo-family in $H$.
	Then, by definition, $C$ must contain at least two cores in $H$.
	Notice that, for every core $C'$ in $H$ that is contained in $C$, all of the deficients in $\Halo(C')$ must be covered by $E'$.
	Suppose not.
	Then there exists a deficient set $U$ in $\Halo(C')$ that is not covered by $E'$.
	Since $U$ interesects $C$ on the terminal set, \Cref{lem:uncrossing} implies that $U\cap C$ is also a deficient set.
	By \Cref{lem:terminal-disjoint}, any two cores are disjoint on the terminal set, which means that $U\cap C$ is strictly contained in $C$ (because $C$ contains another core $C''$ distinct from $C'$).
	The existence of $U\cap C$ contradicts the fact that $C$ is a core in $H\cup E'$.
	Thus, we conclude that $H\cup E'$ has at most $(\gamma/2)\nu$ cores of this type.

	Summing it up, the total number of cores in $H\cup E'$ is at most $(1-\gamma/2)\nu$ as claimed.
\end{proof}

It follows as a corollary that our algorithm terminates within $O(\log q)$ iterations.
\begin{corollary} \label{cor:num-iterations}
	The number of iterations of our algorithm is at most $O(\log q)$, where $q$ is the number of terminals.
\end{corollary}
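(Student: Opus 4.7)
The plan is to combine the per-iteration geometric decrease given by \Cref{lem:cores-decrease} with an initial bound of $q$ on the number of cores, so that after $O(\log q)$ iterations no core remains and the \textbf{repeat} loop terminates.

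First, I would establish an upper bound of $q$ on the number of cores at the start of any iteration. By definition every core $C$ is a deficient set, so $C \cap T \neq \emptyset$. By \Cref{lem:terminal-disjoint}, two distinct cores $C$ and $C'$ (each being the minimal member of its own Halo-family $\Halo(C)$, $\Halo(C')$) satisfy $C\cap C'\cap T=\emptyset$. Hence the sets $C\cap T$, as $C$ ranges over the cores, form a disjoint collection of nonempty subsets of $T$, so the total number of cores at any time is at most $q=|T|$.

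Next, I would apply \Cref{lem:cores-decrease} inductively. The subroutine invoked in line~5 of \Cref{algo-1} covers at least a $\gamma=1/9$ fraction of the current Halo-families, so by \Cref{lem:cores-decrease} the number of cores shrinks by a factor of $(1-\gamma/2)=17/18<1$ per iteration. Letting $\nu_i$ denote the number of cores at the start of iteration $i$, we have $\nu_0\le q$ and $\nu_{i+1}\le (17/18)\,\nu_i$, so $\nu_t\le (17/18)^t q$. Thus after $t=\lceil q/\log(18/17)\rceil\cdot \log q = O(\log q)$ iterations we obtain $\nu_t<1$, i.e.\ $\nu_t=0$.

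Finally, once there is no core in $H_{\ell+1}$, there is also no deficient set (since any deficient set contains some core, as cores are the minimal deficient sets), so the termination condition of the \textbf{repeat}--\textbf{until} loop is met and the algorithm returns $H_{\ell+1}$. The main subtlety to be careful about is the potential appearance of ``new'' cores across iterations—this is precisely what is controlled by \Cref{lem:cores-decrease}, which already accounts for the fact that freshly created cores must absorb at least two old cores, so no additional argument is needed beyond citing that lemma. The only other mild concern is the randomization in the subroutine of line~5, but since that step can be repeated $O(\log n)$ times (or derandomized as in \Cref{sec:derandomize}) to succeed with high probability, it does not affect the asymptotic iteration count.
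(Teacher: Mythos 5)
Your argument is essentially the intended one: the paper leaves the corollary as an immediate consequence of \Cref{lem:cores-decrease}, and your proof correctly supplies the missing details — the initial bound $\nu_0\le q$ via terminal-disjointness of cores (\Cref{lem:terminal-disjoint}), the geometric decay $\nu_{i+1}\le(17/18)\nu_i$ from covering a $1/9$ fraction of Halo-families, and the observation that ``no cores'' implies ``no deficient sets'' since cores are the minimal deficient sets. The only flaw is arithmetic: you write $t=\lceil q/\log(18/17)\rceil\cdot\log q=O(\log q)$, but the left-hand side as written is $O(q\log q)$; what you mean (and what the decay gives) is $t=\lceil\log q/\log(18/17)\rceil=O(\log q)$. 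With that typo corrected, the proof is right and matches the paper's reasoning. Your closing remark on randomization is also appropriate: repeating the constant-success subroutine $O(\log n)$ times within a single iteration (or using the derandomization of \Cref{sec:derandomize}) does not change the $O(\log q)$ count of outer-loop iterations.
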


By \Cref{cor:num-iterations}, our algorithm for rooted-connectivity augmentation terminates with in $O(\log q)$, and each round, we buy a set of edges whose cost is at most $4\sum_{e\in E_+}c_ex_e$; see \Cref{sec:covering-halo-families}. Therefore, the total cost incurred by our algorithm is at most $O(\log q)$ times the optimal LP solution, implying an LP-based $O(\log q)$-approximation algorithm as required by \Cref{lem:LP-scaling}. The following lemma then follows immediately.

\begin{lemma}\label{lem:rooted-aug_main}
	Consider the problem of augmenting the rooted-connectivity of a directed graph from $\ell$ to $\ell+1$ when an input graph is quasi-bipartite. 
	There exists a polynomial-time algorithm that gives a feasible solution whose cost at most $O(\log q)$ that of the optimal solution to the standard LP-relaxation.
	In particular, there exists a polynomial-time LP-based $O(\log q)$-approximation algorithm for the problem.
\end{lemma}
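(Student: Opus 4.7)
The plan is to analyze Algorithm~\ref{algo-1} by separately bounding the number of outer iterations and the per-iteration cost, both in terms of the initial LP value $\tau := \Value(\LP^{aug}(\ell))$ for the starting graph $G_0 = H_\ell$. Feasibility is immediate because the loop exits only when no deficient set remains in $H_{\ell+1}$, i.e., when $H_{\ell+1}$ is $(\ell+1)$-rooted-connected.

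For the iteration count I would invoke \Cref{cor:num-iterations}. The initial number of cores is at most $q$: each core is a deficient set and hence contains at least one terminal, while by \Cref{lem:terminal-disjoint} distinct cores are terminal-disjoint. Applying \Cref{lem:cores-decrease} with $\gamma = 1/9$---the guarantee of the subroutine of \Cref{sec:covering-halo-families}---shrinks the number of cores by a factor $(1 - 1/18)$ per iteration, so the outer loop terminates within $O(\log q)$ rounds.

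For the per-iteration cost, the subroutine of \Cref{sec:covering-halo-families} purchases edges $E'$ of cost at most $4 \cdot \Value(\LP^{aug}(\ell))$ measured on the \emph{current} $G_0 = H_\ell \cup B$, where $B$ denotes the set of edges already added. The missing piece I would supply is LP-monotonicity: this current LP value never exceeds $\tau$. The key observation is that if $U$ is deficient in $H_\ell \cup B$ then $\deg^{in}_{H_\ell \cup B}(U) = \ell$, and since $H_\ell$ is $\ell$-rooted-connected we have $\deg^{in}_{H_\ell}(U) \geq \ell$, forcing $\deg^{in}_{H_\ell}(U) = \ell$ and $\deg^{in}_B(U) = 0$. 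Hence $U$ is already deficient in $H_\ell$ and its initial LP constraint places no weight on edges of $B$; restricting an optimal solution $x^{*}$ of the initial $\LP^{aug}(\ell)$ to the variables $E(G) \setminus E(H_\ell \cup B)$ therefore yields a feasible solution of cost at most $\tau$ for the current LP.

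Combining the two bounds gives a total edge cost of at most $O(\log q) \cdot 4\tau = O(\log q) \cdot \tau$, which is the stated LP-relative bound and hence also an $O(\log q)$-approximation. The deterministic polynomial-time guarantee follows by substituting the derandomized subroutine of \Cref{sec:derandomize}. I expect the only delicate step to be the LP-monotonicity argument above; the iteration-count and per-iteration-cost pieces are direct applications of results already established in the excerpt.
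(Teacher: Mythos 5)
Your proposal is correct and follows essentially the same route as the paper: combine the $O(\log q)$ bound on iterations from \Cref{cor:num-iterations} with the per-iteration cost guarantee of $4\sum_{e\in E_+}c_e x_e$ from the subroutine of \Cref{sec:covering-halo-families}, using \Cref{lem:LP-scaling} to interpret the result. The LP-monotonicity argument you supply (that a set deficient in $H_\ell\cup B$ must already be deficient in $H_\ell$ with no incoming edge from $B$, so the initial LP optimum restricted away from $B$ remains feasible) is exactly the detail the paper's terse one-paragraph proof leaves implicit, and your version of it is correct.
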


\paragraph*{Remark}
Lastly, we remark that one may simply cover all the Halo-families in each iteration. However, the number of rounds the randomized algorithm required will be at least $O(\log q)$, meaning that the total number of iterations is $O(\log^2 q)$.  Consequently, this implies that the algorithm has to pay a factor $O(\log^2 q)$ in the approximation ratio. We avoid the extra $O(\log q)$ factor by covering only a constant fraction of the Halo-families.

\subsection{Correctness and Overall Analysis}
\label{sec:overview:analysis}
First, to prove the feasibility of the solution subgraph, 
it suffices to show that the rooted-connectivity of the solution subgraph increasess by at least one in each connectivity augmentation step. 
This simply follows by the stopping condition of the Halo-set decomposition method that it runs until there exists no core in the graph (and thus no deficient sets). 
It then follows by Menger's theorem that the number of edge-disjoint paths from the root vertex $r$ to each terminal $t\in T$ must be increased by at least one.

Next we analyze the cost. 
By \Cref{lem:rooted-aug_main},  the approximation factor incurred by \Cref{algo-1} is $O(\log q)$, and it also bounds the integrality gap of $\LP^{aug}(\ell)$. 
Consequently, letting $\opt_k$ denote the cost of an optimal solution to $k$-DST,
by Lemma~\ref{lem:LP-scaling}, the total expected cost incurred by the algorithm is then 
\begin{align*}
\sum_{\ell=1}^{k}O(\log q)\cdot \Value(\LP^{aug}(\ell)) 
 &= O(\log q)\cdot \left(\sum_{\ell=1}^{k}\frac{1}{k-\ell+1}\right) \cdot \Value(\LP(k))\\
 &= O(\log q\log k)\cdot \opt_k.
\end{align*}
This completes the proof of \Cref{thm:main-theorem} (and also \Cref{thm:main-theorem-det}).

\section{Covering Halo-Families via Set Cover}
\label{sec:covering-halo-families}

In this section, we present our subroutine for covering the Halo-families that arose from the Rooted-Aug problem. As mentioned in the introduction, the key ingredient in our algorithm is the reduction from the problem of covering Halo-families to the Set Cover problem. However, our instance of the Set Cover problem has an exponential number of subsets, which more resemblances to an instance of the {\em Facility Location} problem. To prove our result, one route would be using {\em Facility Location} as an intermediate problem in the presentation. However, we prefer to directly apply a reduction to the Set Cover problem to avoid confusing the readers.

\subsection{The Reduction to Set Cover and Algorithm}
\label{sec:covering-halo-families:reduction}

As an overview, our reduction follows from simple observations.
\begin{itemize}
    \item[(P1)] For any minimal subset of edges that covers a Halo-family $\Halo(C)$, there is only one edge $e$ that has head in $\Halo(C)$ and tail outside. Let us say $e$ is outer-cover $\Halo(C)$ since it is coming from the outside of the family.
    \item[(P2)] Any edge can be contained in at most one $\Halo(C)$, i.e., there is at most one halo-families $\Halo(C)$ such that both head and tail of $e$ are contained in $H(C)$.
    \item[(P3)] An LP for covering a single Halo-family is integral.
\end{itemize}

Now an instance of the Set Cover problem can be easily deduced. We define each Halo-family $\Halo(C)$ as an element, and we define each edge $e$ as a subset.
However, we may have multiple subsets corresponding to the same edge $e$ as it may serve as an ``outer-cover'' for many Halo-families. Thus, we need to enumerate all the possible collections of Halo-families that are outer-covered by $e$. We avoid getting exponential number of subsets by using the solution from an LP (for the connectivity augmentation problem) as a guideline. 

Before proceeding, we need to formally define some terminologies. 
Let $\hat{G}$ be the current solution subgraph. 
We say that an edge $e$ {\em outer-covers} a Halo-family $\Halo(C)$ if the head of $e$ is in $H(C)$ and the tail is not in $H(C)$ and that there exists a subset of edges $E'\subseteq E_+\setminus E(\hat{G})$ such that (1) both endpoints of every edge in $E'$ are contained in $H(C)$ and (2) the set of edges $E'\cup\{e\}$ covers $\Halo(C)$.

For each Halo-families $\Halo(C)$, we define the set of edges $I^e_C$ to be the minimum-cost subset of edges $E'\subseteq E_+\setminus E(\hat{G})$ whose both endpoints are in $H(C)$ and that $E'\cup\{e\}$ covers $\Halo(C)$, and we denote the cost of $I^e_C$ by $\dist^e_C$. 
We may think that $\dist^e_C$ is the cost for covering $\Halo(C)$ given that $e$ has been taken for free. 
We use the notation $E[C]$ to mean the set of edges whose both endpoints are contained in the Halo-set $H(C)$.
We denote the cost of the fractional solution restricted to $E[C]$ by $\cost_x(E(C)) = \sum_{e\in E[C]}c_ex_e$.

Our reduction is as follows.
Let $H$ be the current solution subgraph.
For each core $C$ in $H$, we define an element $C$.
For each edge $e\in E_+\setminus E(H)$, we define a subset $S_e$ by adding to $S_e$ an element $C$ if $\dist^e_C \leq \cost_x(E[C])$.
This completes a reduction. It is not hard to see that the resulting instance of the Set Cover problem has polynomial size.
To show that our reduction runs in polynomial-time, we need to give a polynomial-time algorithm for computing $\dist^e_C$, which we defer to \Cref{sec:covering-halo-families:properties}. Here we leave a forward reference to \Cref{lem:dist-polytime}. 
Our algorithm that covers a constant fraction of the collection of Halo-families is then followed by simply 
picking each edge $e$ with probability $x_e$ and add all the edges $I^e_C$, for all cores $C\in S_e$, to the solution subgraph;
if a core $C$ is outer-covered by two picked edges, then we add only one edge-set $I^e_C$.
We claim that the set of edges chosen by our algorithm covers at least $1/9$ fraction of the Halo-families, while paying a cost of at most four times the optimum (with a constant probability).
In particular, we prove the following lemma.

\begin{lemma}
\label{lem:partial-cover}
With constant probability, 
the above algorithm covers at least $1/9$ fraction of the collection of Halo-families,
and the cost of the of the edges chosen by the algorithm has cost at most $4\sum_{e\in E_+}c_ex_e$.
In particular, the algorithm partially covers the collection of the Halo-families, while paying 
the cost of at most constant times the optimum.
\end{lemma}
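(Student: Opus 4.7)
My plan is to bound both the cost and the coverage of the algorithm separately in expectation and then combine them via Markov-type inequalities to obtain the claimed joint constant-probability event. \textbf{Cost.} I would decompose the total cost into the cost of the outer edges sampled in step~1 and the cost of the inner completion sets $I^e_C$ added in step~2. The expected outer cost is exactly $\sum_{e \in E_+} c_e x_e$. For the inner cost, by the tie-breaking rule each core $C$ receives at most one completion set, whose cost $\dist^e_C$ is at most $\cost_x(E[C])$ by the definition of $S_e$. Observation (P2) says the sets $E[C]$ are pairwise disjoint across cores, so $\sum_C \cost_x(E[C]) \leq \sum_{e \in E_+} c_e x_e$, giving a deterministic upper bound of $\sum_{e \in E_+} c_e x_e$ on the inner cost. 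Markov applied to the random outer cost then shows that the total cost is at most $4 \sum_{e \in E_+} c_e x_e$ with probability at least $2/3$.

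\textbf{Coverage.} Fix a core $C$, let $\mathcal{O}_C$ denote its outer-covering edges, and set $\alpha_C = \sum_{e \in \mathcal{O}_C} c_e x_e$ and $\beta_C = \cost_x(E[C])$. Since $C$ is covered exactly when some $e$ with $C \in S_e$ is sampled, it suffices to show $\sum_{e : C \in S_e} x_e = \Omega(1)$. The LP cut constraint for $H(C)$ gives $\sum_{e \in \mathcal{O}_C} x_e \geq 1$, but on its own this leaves open the possibility that the bulk of the mass lies on outer edges with $\dist^e_C > \beta_C$. To rule this out I invoke property (P3): the single-Halo-family covering LP is integral, so the restriction of $x$ to the edges relevant to $\Halo(C)$ decomposes as a convex combination $\sum_i \lambda_i y_i$ of $\{0,1\}$-feasible covers. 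By property (P1) each $y_i$ uses exactly one outer edge $e_i$ and the remaining edges cost at least $\dist^{e_i}_C$. Setting $\mu_e = \sum_{i : e_i = e} \lambda_i$ yields $\mu_e \leq x_e$, $\sum_e \mu_e = 1$, and $\sum_e \mu_e(c_e + \dist^e_C) \leq \alpha_C + \beta_C$. A Markov bound on the $\dist^e_C$-component then forces a constant $\mu$-fraction to lie on edges with $C \in S_e$, and $\mu_e \leq x_e$ lifts this to $\sum_{e : C \in S_e} x_e = \Omega(1)$. Independence of the sampling gives $\Pr[C \text{ is covered}] \geq 1 - e^{-\Omega(1)}$, a positive constant, and the expected number of covered Halo-families is $\Omega(\nu)$, where $\nu$ is the total number of cores.

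\textbf{Combining and main obstacle.} Applying Markov's inequality to the number of \emph{uncovered} halo-families shows that at least a $1/9$ fraction is covered with some constant probability; a union bound over this event and the $2/3$-probability cost event yields both guarantees simultaneously with positive constant probability, as required. The central difficulty is the coverage analysis: starting from the mere LP feasibility $\sum_{e \in \mathcal{O}_C} x_e \geq 1$, transferring enough $x$-mass onto the set of edges for which $C \in S_e$ requires the single-family LP integrality (P3) in an essential way, together with the structural properties (P1) and (P2) that tie outer edges to inner completions. The cost analysis, by contrast, is essentially a direct consequence of the disjointness afforded by (P2) and the threshold built into the definition of $S_e$.
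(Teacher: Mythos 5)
Your proposal follows the paper's proof essentially step by step: the cost bound decomposes into outer edges (Markov on the expected $\sum_{e\in E_+} c_e x_e$) plus inner completion sets bounded deterministically via the disjointness of (P2)/\Cref{lem:internal-edge-disjoint}, while the coverage bound passes through the TDI integral decomposition of a feasible sub-vector of the single-Halo-family LP, uses (P1)/\Cref{lem:minimal-fractional-cover} to extract a unique outer edge per integral piece, and applies Markov on the weighted $\dist^{e}_C$ mass to show that a constant amount of $x$-mass sits on edges with $C\in S_e$, exactly as in \Cref{lem:set-cover-feasibility}. Your $\mu_e$ reweighting is a cleaner restatement of the paper's convex-combination argument, and the concluding Markov-plus-union-bound step matches \Cref{sec:covering-halo-families:partial-cover} and \Cref{sec:covering-halo-families:cost}.
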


To prove \Cref{lem:partial-cover}, we need to show that the fractional solution defined by $\{x_e\}_{e\in E+}$ is (almost) feasible to the Set Cover instance, which then implies that the set of edges we bought covers a constant fraction of the Halo-families with probability at least $2/3$.
Then we will show that the cost of the fractional solution to the Set Cover instance is at most twice that of the optimal solution to $\LP^{aug}(\ell)$,
thus implying that we pay at most six times the optimum with probability $2/3$.

To be more precise, we show in \Cref{sec:covering-halo-families:partial-cover} that our algorithm covers at least $1/3$ fraction of the Halo-families in expectation, meaning that we cover less than $1/9$ fraction with probability at most $1/3$.
Then we show in \Cref{sec:covering-halo-families:cost} that the expected cost incurred by our algorithm is $2\sum_{e\in E_{+}}c_ex_e$, thus implying that we pay more than six times that of the LP with probability at most $1/3$.
Applying the union bound, we conclude that our algorithm covers at least $1/9$ fraction of the Halo-families, while paying the cost of at most six times the optimal LP solution with probability at least $1/3$.
(Note that in \Cref{sec:covering-halo-families:cost}, we show a slightly stronger statement that the cost incurred by our algorithm is $4\sum_{e\in E_{+}}c_ex_e$ with probability at least $2/3$.)
To finish our proof, we proceed to prove the above two claims and then prove the structural properties used in the forward references.

\subsection{Partial Covering}
\label{sec:covering-halo-families:partial-cover}

We show in this section that our algorithm covers at least $1/9$ fraction of the Halo-families with probability at least $1/3$

First, we show that the LP variable defined by $x_e$ is almost feasible to the LP-relaxation of the Set Cover problem. We note that our proof will need a forward reference to \Cref{lem:minimal-fractional-cover}.
\begin{lemma}
\label{lem:set-cover-feasibility}
The LP variable $\{y_e\}_{e\in E_+}$, where $y_e =\min\{1, 2x_e\}$ for all edges $e\in E_+$ is feasible to the Set Cover instance. 
That is, for any core $C$ in the graph,
$$\sum_{e\in E_+:C \in S_e}x_e \geq 1/2.$$
\end{lemma}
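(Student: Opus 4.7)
The plan is to leverage the integrality of the LP for covering a single Halo-family (property \textbf{(P3)} of Section~\ref{sec:covering-halo-families}) together with the structural fact \textbf{(P1)} that every minimal integer cover of $\Halo(C)$ uses exactly one outer edge. Fix a core $C$, let $D(C)$ denote the set of edges that outer-cover $\Halo(C)$, and restrict the optimal LP solution ${\bf x}$ to the edge set $E[C] \cup D(C)$. Since ${\bf x}$ is feasible for $\LP^{aug}(\ell)$ and every member of $\Halo(C)$ is a deficient set, this restriction is a fractional cover of $\Halo(C)$, so by \textbf{(P3)} it lies in the convex hull of minimal integer covers of $\Halo(C)$.

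I would then write ${\bf x}|_{E[C]\cup D(C)} = \sum_i \lambda_i \chi_{z_i}$, where each $z_i = \{e_i\}\cup F_i$ is a minimal integer cover with unique outer edge $e_i\in D(C)$ and internal completer $F_i \subseteq E[C]$ satisfying $\cost(F_i) \geq \sigma^{e_i}_C$ by the definition of $\sigma^e_C$. Setting $\mu_e := \sum_{i:\,e_i=e}\lambda_i$ produces a probability distribution on $D(C)$ dominated by $x$ in the sense that $\mu_e \leq x_e$ and $\sum_e \mu_e = 1$, and averaging $\cost(F_i) \geq \sigma^{e_i}_C$ against the weights $\lambda_i$ yields the key inequality
\[
\sum_{e \in D(C)} \mu_e \cdot \sigma^e_C \;\leq\; \cost_x(E[C]).
\]

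From this averaged bound, a Markov-style calculation completes the proof: the $\mu$-expectation of $\sigma^e_C$ is at most $\cost_x(E[C])$, so a constant fraction of the $\mu$-mass sits on ``good'' edges (those $e$ with $\sigma^e_C \leq \cost_x(E[C])$, i.e.\ with $C \in S_e$). Transferring this mass bound from $\mu$ to the rescaled variable $y_e = \min\{1, 2x_e\}$ via the pointwise inequality $x_e \geq \mu_e$, the factor $2$ in the definition of $y$ precisely absorbs the factor-of-$2$ slack in Markov's inequality, so that $\sum_{e:\,C\in S_e} y_e \geq 1$ and equivalently $\sum_{e:\,C\in S_e} x_e \geq 1/2$.

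The main obstacle is the convex-decomposition step, which relies on the forward-referenced \Cref{lem:minimal-fractional-cover} to characterize the vertices of the Halo-family covering polytope: any fractional cover of $\Halo(C)$ decomposes into minimal integer covers, each using exactly one outer edge together with an internal completer whose cost lower-bounds $\sigma^{e_i}_C$. Handling the subtlety that $x(D(C))$ may strictly exceed $1$ (so that $\mu_e$ is strictly smaller than $x_e$ on some outer edges, and one must keep track of where the dominated distribution $\mu$ differs from $x$) is the principal technical point, but once the structural foundation is in place, the Markov and scaling steps are essentially routine.
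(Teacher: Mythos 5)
Your approach is essentially the same as the paper's: restrict the augmentation LP solution to edges near the Halo-set, invoke the integrality of $\LP^{halo}$ (Frank's TDI result) to decompose the fractional cover into integer covers of $\Halo(C)$, extract the unique outer edge $e_i$ from each piece, bound $\dist^{e_i}_C$ by the internal part $\cost(F_i\setminus\{e_i\})$, and average against $\lambda_i$ to argue that at least half the mass lands on edges $e$ with $C\in S_e$.

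However, the order of operations matters, and the step you flag as a ``subtlety'' is precisely where your write-up has a gap. You propose to decompose the raw restriction ${\bf x}|_{E[C]\cup D(C)}$ directly. If $\sum_{e\in D(C)}x_e>1$, the convex decomposition into integral vectors can and will contain pieces $z_i$ with more than one outer edge, so neither ``minimal integer cover'' nor ``unique outer edge $e_i$'' holds, $\mu_e:=\sum_{i:e_i=e}\lambda_i$ is not well-defined, and — crucially — the bound $\cost(F_i)\geq \dist^{e_i}_C$ fails because $F_i=z_i\setminus\{e_i\}$ is then not contained in $E[C]$, so it is not a valid competitor in the minimization defining $\dist^{e_i}_C$. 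The paper resolves this by first passing to a \emph{minimal} fractional cover $x'\leq x$ of $\Halo(C)$; then \Cref{lem:minimal-fractional-cover} forces $\sum_{e\in\delta^{in}(H(C))}x'_e=1$, so each integral vector in the decomposition has exactly one outer edge and the rest of the argument (including the transfer $x'_e\leq x_e$) goes through. This minimality step is not a cosmetic detail you can defer — it is what makes the decomposition-into-spiders picture valid — so you should state it explicitly and decompose $x'$, not $x$. Once you do that, your $\mu$ is literally $x'$ restricted to outer edges, the averaging inequality is exactly the paper's, and the Markov/rescaling endgame matches.

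One further small point: both your argument and the paper's as stated lose a factor of $2$ in the Markov step — averaging gives $\Pr_\lambda[\cost(F_i\setminus\{e_i\})\leq 2\cost_x(E[C])]\geq 1/2$, while $S_e$ is defined via the threshold $\cost_x(E[C])$ without the $2$. This is only a constant and does not affect the lemma's role in the final $O(\log q\log k)$ bound, but if you want the inequality exactly as written you should either define $S_e$ with threshold $2\cost_x(E[C])$ or carry the extra factor of $2$ into $y_e$.
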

\begin{proof}
Consider a core $C$, which corresponds to an element in the Set Cover instance.
We take the set of edges incident to its Halo-set $H(C)$, and 
find a minimal vectors $\{x'_e\}_{e\in E_+}$ such that $\{x'_e\}_{e\in E_+}$ fractionally covers the Halo-family $\Halo(C)$ and $x'_e\leq x_e$ for all edges $e\in E_+$.
(Note that by minimality we mean that, for any edge $e$ and any $\epsilon>0$, decreasing the value of $x'_e$ by $\epsilon$ results in an infeasible solution.)
By \Cref{lem:minimal-fractional-cover}, we have $\sum_{e\in\delta^{in}(H(C))}x'_e=1$, i.e., the total weight of the LP value of edges incoming to $H(C)$ is exactly one.

Next consider the following LP.

$$
\LP^{halo}=\left\{
\begin{array}{lll}
  \min & \sum_{e'\in E_{+}(H(C))}c_{e'}x_{e'}\\
  \text{s.t} 
     & \sum_{e'\in \delta^{in}_{E_+}(U)}x_{e'}\geq 1 &\forall U\in\Halo(C)\\
     & 0 \leq x_{e'} \leq 1 & \forall e\in E_+(H(C))\\
\end{array}\right.
$$

By \Cref{lem:uncrossing}, we know that both the intersection and union of any two deficient sets in $\Halo(C)$ are also deficient sets in $\Halo(C)$. This means that the Halo-family $\Halo(C)$ is an intersecting family. 
It then follows from the result of Frank \cite{frank1979kernel} that the above LP is {\em Totally Dual Integral}, which means that any convex point of its polytope is an integral solution (including the optimal one).
Since $\{x'_e\}_{e\in E'}$ is a feasible solution to $\LP^{halo}$, it can be written as a convex combination of integral vectors in the polytope, i.e.,
$$
{\bf x} = \sum_{i=1}^w \lambda_i{\bf z}^i, \text{ where } \sum_{i=1}^w\lambda_i = 1.
$$
Let $F_i$ be the set of edges induced by each integral vector ${\bf z}^i$ (i.e., $F_i$ is the support of ${\bf z}^i$).
Since the LP requires $H(C)$ to have at least one incoming edge, we deduce that, for each $F_i$, there exists one edge $e_i\in F_i$ entering $H(C)$.

Now we compare the cost of $\dist^{e_i}_C$ to the cost of $F_i\setminus\{e_i\}$. By minimality of $\dist^{e_i}_C$, we know that $\dist^{e_i}_C \leq \cost(F_i\setminus\{e_i\})$ for all $i=1,\ldots,w$. We recall that we add a core $C$ to the set $S_{e_i}$ only if $\dist^{e_i}_C\leq\cost_{x}(E[C])$.
Since $\cost_{x'}(E[C])$ is the convex combination of ${\bf Z}^i$, at least half of the $F_i$ (w.r.t. to the weight $\lambda_i$) must have
$\dist^{e_i}_C \leq \cost(F_i\setminus\{e_i\}) \leq \cost_{x}(E[C])$;
that is, $\sum_{i:\dist^e_C \leq \cost(F_i\setminus\{e_i\})}\lambda_i \geq 1/2$.
Therefore, we conclude that the sum of $y_{e_i}$ over all $e_i$ such that $\dist^{e_i}_C \leq \cost_{x}(E[C])$ is at least one, thus proving the lemma.
\end{proof}

We remark that we may define the Set Cover instance so that $\{x_e\}_{e\in E_+}$ is exactly a feasible solution to the LP for the Set Cover problem by using the integer decomposition as in the proof of \Cref{lem:set-cover-feasibility}. However, we choose to present it this way to keep the reduction simple. 

Now we finish the proof of our claim.
Consider a core $C$.
Note that by construction, every time we pick an edge $e$, we also add the set of edges $F_C$, for each $C\in S_e$, such that $F_C\cup\{e\}$ covers $\Halo(C)$.
Thus, the probability that the algorithm picks no edges $e$ such that $C\in S_e$ is
$$
\Pi_{e\in E_+: C\in S_e}(1 - x_e)
\leq \exp\left(-\sum_{e\in E_+}x_e\right)
\leq \exp(-1/2)
\leq \frac{2}{3}.
$$
The first inequality follows because $1-x \leq \exp(-x)$, for $0 < x \leq 1$. 
That is, the probability that the algorithm does not cover a core $C$ is at most $2/3$,
which means that the expected fraction of Halo-families covered by our algorithm is at least $1/3$.
Applying Markov's inequality, we conclude that with probability at least $2/3$ our algorithms covers at least $1/9$ fraction of the Halo-families.

Our algorithm can be derandomized using the method of conditional expectation. Please see \Cref{sec:derandomize} for details.

\subsection{Cost Analysis}
\label{sec:covering-halo-families:cost}

Now we analyze the expected cost of the edges we add to the solution subgraph.
We classify the cost incurred by our algorithm into two categories. 
The first case is the set of edges $e$ that we pick with probability $x_e$.
The expected cost of this case is $\sum_{e\in E_+}c_ex_e$.
Applying Markov's inequality, we have that with probability at least $2/3$ the cost incurred by the edges of this case is at most $3\sum_{e\in E_+}c_ex_e$.

The second case is the set of edges corresponding to each subset $S_e$ whose the edge $e$ is added to the solution.
By construction, a core $C$ is added to $S_e$ only if $\cost_x(E[C])$ is greater than $\dist^e_C$ (i.e., the cost of the set of edges $I^e_C$).
We also recall that we also add one set of edges $I^e_C$ to the solution if there are more than one edges $e$ such that $C\in S_e$ are chosen. 
As the set of edges $E[C]$ and $E[C']$ are disjoint for any two cores $C\neq C'$ (please see the forward reference to \Cref{lem:internal-edge-disjoint}), we conclude that the cost incurred by the edges of this case is at most $\sum_{e\in E_+}c_ex_e$ (regardless of the choices of the edges randomly picked in the previous step). 
Therefore, with probability at least $2/3$ the cost of edges chosen by our algorithm is at most $4\sum_{e\in E_+}c_ex_e$.

\subsection{Structural Properties of the LP solution}
\label{sec:covering-halo-families:properties}

We devote this last subsection to prove properties (P1) to (P3) and all the forward references as discussed earlier.
Property (P3) simply follows from the fact that the intersection and union of any two members of a Halo-family $\Halo(C)$ are also members of $\Halo(C)$, which means that the polytope of the problem of covering $\Halo(C)$ is integral due to the result of Frank \cite{frank1979kernel}.
Thus, we are left to prove the property (P1) and (P2) and to present a polynomial-time algorithm for computing $\dist^e_C$, which thus complete the proof that our reduction can be done in polynomial time.

First, we prove Property (P1), which allows us to reduce the instance of the problem of covering Halo-families to a Set Cover instance.

\begin{lemma}[Unique Entering Edge in Minimal Cover]
\label{lem:minimal-fractional-cover}
Consider a minimal fractional cover $x$ of a Halo-family $\Halo(C)$.
That is, $x$ is a feasible solution to $\LP^{halo}$ whose collection of deficient sets is defined by $\Halo(C)$, and decreasing the value $x_e$ of any edge $e\in E_{+}$ results in an infeasible solution. 
It holds that $\sum_{e\in\delta^{in}_{E_+}(H(C))}x_e = 1$.
Thus, for an integral solution $E'$, there is exactly one edge $e\in E'$ entering the Halo-set $H(C)$.
\end{lemma}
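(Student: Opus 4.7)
The plan is to combine an uncrossing argument with \Cref{lem:union-is-deficient} to reduce the halo-set's fractional cut to that of some carefully chosen tight deficient set contained in $H(C)$. The lower bound $\sum_{e\in\delta^{in}_{E_+}(H(C))}x_e\ge 1$ is immediate, since \Cref{cor:haloset-is-deficient} places $H(C)$ itself into $\Halo(C)$, so the constraint of $\LP^{halo}$ applied to $H(C)$ gives the inequality. The real work is to prove the matching upper bound.

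First I would exploit minimality to assign, to each edge $e$ with $x_e>0$, a ``tight witness'' $U_e\in\Halo(C)$ with $e\in\delta^{in}(U_e)$ and $\sum_{e'\in\delta^{in}(U_e)}x_{e'}=1$. Such a $U_e$ must exist, because otherwise $x_e$ could be decreased by a small positive amount without violating any covering constraint, contradicting minimality. Next I would establish that the family of tight sets inside $\Halo(C)$ is closed under union: this uses submodularity of the fractional in-degree $f_x(U):=\sum_{e\in\delta^{in}(U)}x_e$ together with \Cref{lem:union-is-deficient} (closure of $\Halo(C)$ under union and intersection). For tight $U,V\in\Halo(C)$,
\[
2 \;=\; f_x(U)+f_x(V) \;\ge\; f_x(U\cup V)+f_x(U\cap V) \;\ge\; 1+1 \;=\; 2,
\]
forcing equality throughout, so both $U\cup V$ and $U\cap V$ are tight members of $\Halo(C)$.

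The heart of the proof is then to construct the ``right'' tight set. I would take
\[
U^{\star} \;=\; \bigcup_{e\in\delta^{in}_{E_+}(H(C)),\; x_e>0} U_e.
\]
By the union-closure above, $U^{\star}\in\Halo(C)$ is tight, and $U^{\star}\subseteq H(C)$ by definition of $H(C)$. For any $e\in\delta^{in}_{E_+}(H(C))$ with $x_e>0$, its head lies in $U_e\subseteq U^{\star}$ while its tail lies outside $H(C)\supseteq U^{\star}$; hence $e\in\delta^{in}(U^{\star})$. Therefore
\[
\sum_{e\in\delta^{in}_{E_+}(H(C))} x_e \;=\; \sum_{\substack{e\in\delta^{in}_{E_+}(H(C))\\ x_e>0}} x_e \;\le\; \sum_{e\in\delta^{in}(U^{\star})} x_e \;=\; 1,
\]
matching the lower bound and giving the desired equality. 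The integral corollary follows by applying the same argument to the indicator vector of an inclusion-minimal integral cover $E'$: each $e\in E'$ carries a tight witness, the same $U^{\star}$ is built, and the identity collapses to $|E'\cap\delta^{in}(H(C))|=1$.

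The main obstacle I foresee is the bookkeeping in the last step: one must verify that \emph{every} positive-mass edge entering $H(C)$ is swept up into $\delta^{in}(U^{\star})$. This is precisely where $U^{\star}\subseteq H(C)$ matters, because it guarantees that any tail outside $H(C)$ is automatically outside $U^{\star}$, so no entering edge of $H(C)$ can degenerate into an internal edge of $U^{\star}$. Everything else (submodularity of $f_x$, union/intersection closure of $\Halo(C)$) is standard machinery already assembled in \Cref{sec:structures}.
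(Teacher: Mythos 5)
Your proof is correct and takes essentially the same approach as the paper: both assign to each positive-mass edge entering $H(C)$ a tight witness set in $\Halo(C)$, invoke submodularity of the fractional cut function $f_x$ together with the closure of $\Halo(C)$ under unions and intersections (\Cref{lem:union-is-deficient}) to conclude that tight sets are closed under union, and then bound the cut at $H(C)$ by the cut at a single tight member of $\Halo(C)$ contained in $H(C)$. The only difference is presentational: the paper argues by contradiction, choosing each witness to be inclusion-wise \emph{maximal} and deriving a contradiction from the union of two supposedly distinct witnesses, whereas you build the union $U^{\star}$ of all witnesses directly and read off the matching upper bound; your route avoids the maximality bookkeeping and handles uniformly the case where all witnesses coincide, so it is arguably a slightly cleaner rendering of the same idea.
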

\begin{proof}
Assume for a contradiction that $\sum_{e\in\delta^{in}_{E_+}(H(C))}x_e > 1$.
By the minimality of $x$, for any edge $e\in\delta^{in}_{E_+}(H(C))$, there exists a deficient set $W_e\in\Halo(C)$ such that\\ $\sum_{e\in\delta^{in}_{E_+}(W_e)}x_e=1$. We choose $W_e$ to be the maximum inclusionwise such set and call it the witness set of $e$. 

Now we take two distinct witness sets $W_e$ and $W_{e'}$, for $e\neq e'$.
By \Cref{lem:union-is-deficient}, both $W_e\cap W_{e'}$ and $W_e\cup W_{e'}$ are deficient sets in $\Halo(C)$.
Let us abuse the notation of $x$.
For any subset of vertices $S\subseteq V(G)$, let $x(S)=\sum_{e\in \delta^{in}_{E_+}(S)}x_e$.
The function $x(S)$ is known to be submodular \cite{FrankJ2016graph}, meaning that
$$
2 = x(W_e) + x(W_{e'}) \geq x(W_e\cap W_{e'}) + x(W_e\cup W_{e'}) \geq 2.
$$
The last inequality follows because $\{x\}_{e\in E_+}$ fractionally covers $\Halo(C)$, which then implies that $x(W_e\cap W_{e'})=x(W_e\cup W_{e'})=1$.
But, this contradicts the choice of $W_e$ (and also $W_{e'}$) because $W_e\cup W_{e'}$ is a deficient set in $\Halo(C)$ strictly containing $W_e$ in which the conditions $x(W_e\cup W_{e'})=1$ and $e\in\delta^{in}_{E_{+}}(W_e\cup W_{e'})$ hold.
\end{proof}

Next we prove Property (P2), which allows us to upper bound the cost incurred by the main algorithm.

\begin{lemma}[Internally Edge-Disjoint]
\label{lem:internal-edge-disjoint}
Consider a quasi-bipartite graph $G$. For any edge $e\in E(G)$, there is at most one core $C\in\mathcal{C}$ such that $e\in E(H(C))$.
\end{lemma}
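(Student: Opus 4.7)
The plan is to derive a contradiction by exploiting the quasi-bipartiteness of $G$ together with the terminal-disjointness of distinct Halo-families (\Cref{lem:terminal-disjoint}) applied to the Halo-sets themselves, which are deficient sets in their own Halo-families (\Cref{cor:haloset-is-deficient}).

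First, I would fix an edge $e=uv\in E(G)$ and assume, for a contradiction, that there exist two distinct cores $C\neq C'$ with $e\in E(H(C))\cap E(H(C'))$, i.e., $u,v\in H(C)\cap H(C')$. The goal is to exhibit a common terminal of $H(C)$ and $H(C')$, which will contradict \Cref{lem:terminal-disjoint}.

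Next, I would use the quasi-bipartite hypothesis to locate a terminal among $\{u,v\}$. Since $H(C)$ is a deficient set, the root $r$ is not in $H(C)$, and in particular $u\neq r$ and $v\neq r$. If both $u$ and $v$ were Steiner (non-terminal) vertices, then the edge $uv$ would join two non-terminal vertices, contradicting the assumption that $G$ is quasi-bipartite. Hence at least one endpoint, say $v$, lies in $T$.

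Finally, I would invoke \Cref{cor:haloset-is-deficient} to conclude that $H(C)\in\Halo(C)$ and $H(C')\in\Halo(C')$. Since $v\in H(C)\cap H(C')\cap T$, the two Halo-family members $H(C)$ and $H(C')$ share a terminal, contradicting \Cref{lem:terminal-disjoint}. Therefore no edge can belong to $E(H(C))$ for two distinct cores, which is the desired conclusion. The proof is essentially a direct combination of the previously established structural lemmas, so there is no real obstacle; the only subtlety is to make sure one first rules out the possibility that an endpoint of $e$ is the root, which is immediate from $r\notin H(C)$.
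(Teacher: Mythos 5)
Your proposal is correct and follows essentially the same argument as the paper: use quasi-bipartiteness to locate a terminal endpoint of $e$, then invoke \Cref{lem:terminal-disjoint} to conclude that at most one Halo-set can contain that terminal. You make two steps explicit that the paper leaves implicit — that $H(C)$ is itself a member of $\Halo(C)$ via \Cref{cor:haloset-is-deficient}, and that neither endpoint of $e$ can be $r$ — both of which are fine and slightly tighten the exposition, but the underlying reasoning is the same.
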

\begin{proof}
Consider any edge $uv\in E(G)$. Since $G$ is a quasi-bipartite graph, one of $u$ and $v$ must be a terminal. By \Cref{lem:terminal-disjoint}, we know that there can be at most one Halo-family $\Halo(C)$, for some $C\in\mathcal{C}$, whose member contains both $u$ and $v$. Hence, the lemma follows.
\end{proof}

Finally, we show that $\dist^e_C$ can be computed in polynomial time. 

\begin{lemma}
\label{lem:dist-polytime}
For any core $C\in\mathcal{C}$ and an edge $e\in E(G)$, the set of edges $I^e_C$ and, thus, its cost $\dist^e_C$ can be computed in polynomial time. Moreover, the value of $\dist^e_C$ is equal to the optimal value of the corresponding covering LP given below.
$$
\LP^{cover}=\left\{
\begin{array}{lll}
  \min & \sum_{e'\in E_{+}(H(C))}c_{e'}x_{e'}\\
  \text{s.t} 
     & \sum_{e'\in \delta^{in}_{E_+}(U)}x_{e'}\geq 1 &\forall U\in\Halo(C)\\
     & 0 \leq x_{e'} \leq 1 & \forall e\in E_+(H(C))\\
     & x_e = 1
\end{array}\right.
$$
\end{lemma}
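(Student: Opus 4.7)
The plan is to establish the lemma in two stages: first show that $\LP^{cover}$ has an integral optimum equal to $\dist^e_C$ (so that solving the LP automatically produces $I^e_C$), and then exhibit a polynomial-time algorithm that actually solves $\LP^{cover}$. The forced constraint $x_e=1$ only fixes a single coordinate, so for the analysis I may as well think of $e$ as already purchased and focus on covering those members $U\in\Halo(C)$ that do not have $e$ as an incoming edge.

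The integrality argument rests on the structural lemmas from \Cref{sec:structures}. By \Cref{lem:union-is-deficient}, the Halo-family $\Halo(C)$ is closed under union and intersection, i.e., it is a ring family. The subfamily of $\Halo(C)$ remaining after fixing $x_e=1$ (the sets not entered by $e$) is obtained by intersecting with the downward-closed set $\{U : \mathrm{head}(e)\notin U\}$, so it is still closed under union and intersection. By Frank's TDI theorem for covering intersecting families~\cite{frank1979kernel}, the polyhedron $\{x\ge 0 : \sum_{e'\in\delta^{in}_{E_+}(U)}x_{e'}\ge 1,\ \forall U\in\Halo(C)\}$ has an integral optimal solution (the box constraints $x\le 1$ preserve this), and the optimum LP value coincides with the combinatorial optimum, which is exactly $\dist^e_C$.

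For the algorithmic part, I would solve $\LP^{cover}$ via the ellipsoid method with a separation oracle of the following form: given a candidate $x$, either certify feasibility or produce a violated $U\in\Halo(C)$. The natural oracle is to minimize the submodular function $f(U)=\sum_{e'\in\delta^{in}_{E_+}(U)}x_{e'}$ over the sub-lattice $\Halo(C)$. Concretely, I would build an auxiliary directed graph on vertex set $H(C)$ by contracting $C$ to a super-source $s$ and $V\setminus H(C)$ to a super-sink $t$; assign capacity $x_{e'}$ to edges of $E_+$ and use the residual capacities from a maximum $r$--$T$ flow of value $\ell$ in $G_0$ to encode the ``deficient in $G_0$'' condition, exactly as in the polynomial-time computation of Halo-sets (Corollary after \Cref{cor:haloset-is-deficient}). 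A minimum $s$--$t$ cut in this auxiliary graph returns the minimizer of $f$ over $\{U : C\subseteq U\subseteq H(C),\ U\text{ deficient in }G_0\}$; iterating the argument of \Cref{lem:union-is-deficient} (or a standard uncrossing on the cut lattice) shows that we can further insist the minimizer contain no other core without leaving polynomial time. Hence separation reduces to a single max-flow computation, and $\LP^{cover}$ is solvable in polynomial time.

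Combining the two stages, the ellipsoid method produces an optimal extreme-point solution of $\LP^{cover}$, which is integral by the TDI argument, and its support is the desired $I^e_C$ with cost $\dist^e_C=\Value(\LP^{cover})$. The main obstacle is the separation step: one must argue that optimizing the submodular ``indegree under $x$'' function can be restricted to the sub-lattice $\Halo(C)$ (i.e., to deficient $G_0$-sets that contain no core other than $C$) while remaining in polynomial time. This is handled exactly as in the proof of the corollary on computing Halo-sets, where the analogous min-cut formulation is shown to respect the ring-family structure of $\Halo(C)$; once that template is reused, the remaining steps are standard.
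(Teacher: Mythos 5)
Your integrality argument mirrors the paper's exactly: both invoke \Cref{lem:union-is-deficient} to show that $\Halo(C)$ is an intersecting (ring) family, then apply Frank's TDI theorem~\cite{frank1979kernel}; your observation that fixing $x_e=1$ passes to a sub-lattice closed under union and intersection is correct (fixing a coordinate takes a face of an integral polyhedron, which is integral), though it is not strictly needed since TDI already applies to the full system over $\Halo(C)$ and box constraints preserve integrality. Where you diverge is the algorithmic half. You go through the ellipsoid method, which requires a separation oracle: you must minimize $\sum_{e'\in\delta^{in}_{E_+}(U)}x_{e'}$ over the implicit lattice $\Halo(C)$, and this is precisely where the handwaving lives --- ``use the residual capacities of a max flow of value $\ell$ in $G_0$ to encode the deficient-in-$G_0$ condition'' has the right flavor (the $\ell$-minimum $r$--$t^*$ cuts form a ring family, so a submodular minimization over them is tractable) but you never pin down the auxiliary capacity function, and you yourself flag this as the main obstacle.

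The paper sidesteps the separation oracle entirely with a direct primal construction: build a small $s^*,t^*$-flow network by taking only the edges inside $H(C)$ together with the $\ell$ edges of $\delta^{in}_{H_{\ell+1}}(H(C))$ and the given edge $e$, zero out the costs of these $\ell+1$ entering edges, let $s^*$ feed into them and let $t^*$ be any terminal of $C$, and compute a minimum-cost $(\ell+1)$-flow. By Menger's theorem, forcing every $U\in\Halo(C)$ to have at least $\ell+1$ entering edges once $e$ and the $\ell$ existing edges are free is exactly the $(\ell+1)$-flow constraint, so the min-cost flow immediately produces $I^e_C$ and $\dist^e_C$. This is both more elementary (no ellipsoid) and more self-contained (no implicit reliance on a lattice-constrained submodular minimization subroutine). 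Your route is recoverable but would need you to spell out the combined capacity network (e.g., $G_0$ edges with a large capacity $M$ and $E_+$ edges with capacity $x_{e'}$, contracting $C$ to one side and $V\setminus H(C)$ to the other), verify that the min-cut is indeed a member of $\Halo(C)$, and then argue polynomial running time for the ellipsoid loop; the min-cost flow formulation is the cleaner way in.
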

\begin{proof}
Consider the Halo-family $\Halo(C)$.
By \Cref{lem:union-is-deficient}, the union and intersection of any deficient sets $U,W\in\Halo(C)$ are also deficient sets in $\Halo(C)$. This means that $\Halo(C)$ is an {\em intersecting family}. It is known that the standard LP for covering an intersecting family is integral (see, e.g., \cite{frank1979kernel}), which implies that we can compute $\dist^e_C$ and its corresponding set of edges $I^e_C$ in polynomial time by solving $\LP^{cover}$.

Alternatively, we may compute $\dist^e_C$ combinatorially using an efficient minimum-cost $(\ell+1)$-flow algorithm. In particular, we construct an $s^*,t^*$-flow network by setting the costs of edges in $\delta^{in}_{H_{\ell+1}}(H(C))\cup\{e\}$ to zero, adding a source $s^*$ connecting to $\ell+1$ edges entering $\Halo(C)$ (which consists of $\ell$ edges from $\delta^{in}_{H_{\ell+1}}(H(C))$ plus the edge $e$) and then picking an arbitrary terminal $t^*\in C$ as a sink. All the edges not in $E(H(C))$ except $\delta^{in}_{H_{\ell+1}}(H(C))\cup\{e\}$ are removed. 
Applying Manger's theorem, it can be seen that every $(\ell+1)$-flow in this $s^*,t^*$-flow network corresponds to a feasible solution to the covering problem with the same cost. This gives a polynomial-time algorithm for computing $\dist^e_C$ and $I^e_C$ as desired. 
\end{proof}

\section{Conclusion and Open Problems}
\label{sec:conclusion}

We have presented our $O(\log q \log k)$-approximation algorithm for $k$-DST when an input graph is quasi-bipartite. 
This is the first polylogarithmic approximation algorithm for $k$-DST for arbitrary $k$ that does not require an additional assumption on the structure of the optimal solution. In addition, our result implies that $k$-DST in quasi-bipartite graphs is equivalent to the Set Cover problem when $k=O(1)$. 

Lastly, we conclude our paper with some open problems. A straightforward question is whether there exists a non-trivial approximation algorithm for $k$-DST for $k\geq 3$ in general case or for a larger class of graphs (perhaps, in quasi-polynomial-time). Another interesting question is whether our randomized rounding technique, which consists of dependent rounds of a randomized rounding algorithm for the Set Cover problem, can be applied without connectivity augmentation. If this is possible, it will give $O(\log k)$ improvements upon the approximation ratios for approximating many problems whose the best known algorithms are based on the Halo-Set decomposition technique.

\paragraph*{Acknowledgement}
Bundit Laekhanukit is partially supported by the national 1000-youth award by the Chinese Government.

The works were initiated while all the authors were at the Institute for Theoretical Computer Science at the Shanghai University of Finance and Economics, and the work were done while Chun-Hsiang Chan and Hao-Ting Wei were in bachelor and master programs in Computer Science at the Institute of Information Science, Academia Sinica, Taipei.

\bibliography{ref}

\newpage \appendix

\section{Derandomization}
\label{sec:derandomize}
In this section, we present a derandomization of our algorithm in \Cref{sec:covering-halo-families} using the method of conditional expectation \cite{AlonS16}. 
We will mostly follow the proof presented in the work of Bertsimas and Vohra~\cite{BertsimasV98} who gave a derandomized technique for the randomized scheme for the Set Cover problem.

In more detail, first observe that the cost incurred by our algorithm comes from two parts.
The firt part is the cost of edges $e$ that we pick with probabilty $x_e$,
and the second part is the cost of edges $I^C_e$ in which the edge $e$ is chosen. 
For the second part, our algorithm guarantees that, for each core $C$, only one set of edges $I^C_e$ will be added to the solution.
Thus, by the construction of $S_e$ and \Cref{lem:internal-edge-disjoint}, the cost incurred by this part is $\sum_{e\in E_+}c_ex_e$ regardless of the choices of the edges $e$ added to the solution from the first part.

Hence, it suffices to show that there exists a deterministic algorithm that pick a set of edges $E'$ that outter-covers at least $1/3$ fraction of the Halo-families, while paying the cost at most $\sum_{e\in E_+}c_ex_e$. 

Let $\mathcal{C}$ be the collection of all the cores in the current solution subgraph.
For a given set of edges $E'\subseteq E_+$, 
we define a function $\tau_C\in\{0,1\}$ for each Halo-family $\Halo(C)$ to indicate whether $\Halo(C)$ is covered by some edge in $E'$,
and we define a function $\mathbb{I}(\vec{\tau})$ to indicate whether $E'$ outer-covers at least $1/9$ fraction of the Halo-families.
The formal definition of these two functions are given below.
$$
\tau_C(E')=\left\{
\begin{array}{ll}
1 & \text{if $E'$ outer-covers $\Halo(C)$}\\
0 & \text{Otherwise}\\
\end{array}
\right.
$$

$$
\mathbb{I}(E')=\left\{
\begin{array}{ll}
1 & \sum_{C\in\mathcal{C}}\tau_C(E') < \frac{|\mathcal{C}|}{9} \\
0 & \text{Otherwise} \\
\end{array}
\right.
$$

Next we define the potential function:
$$\Phi(E') = \sum_{e\in E'}c_e + M\cdot\mathbb{I}(E') \text{, where $M=3\sum_{e\in E_+}c_ex_e$.} $$

Observe that $\Phi(E') \leq M$ if $E'$ outer-covers at least $1/9$ fraction of the Halo-families, while having the cost at most three times that of the LP solution; otherwise,  $\Phi(X) > M$.
Notice that, by \Cref{lem:partial-cover}, if we add each edge $e\in E_+$ to $E'$ with probability $x_e$, then $\E[\Phi(E')]\leq M$. 
Thus, there exists an event that $\Phi(X) \leq M$, which will give us the desired integer solution. 
We then follow the method of conditional expectation (see, e.g., \cite{AlonS16}).
That is, we order edges in $E_+$ in an arbitrary order, say $e_1,e_2,\ldots,e_{|E_+|}$.
Let $E''$ be the set of edges that we try to simulate the set of randomly chosen edges $E'$.
Initially, $E_{det}=\emptyset$.
Then we decide to add each edge $e_i$, for $i=1,2,\ldots,|E_+|$ to $E'$ if 
$\E[\Phi(E')| E_{det}\cup\{e_i\}\subseteq E']] \leq \E[\Phi(E')| E_{det}\subseteq E']] $.
This way the resulting set of edges $E_{det}$ outer-covers at least $1/9$ fraction of the Halo-families, while having the cost of at most $3\sum_{e\in E_+}c_ex_e$.
Therefore, after adding the set of edges $I^C_e$ for each core outer-covered by some edge $e\in E_{det}$, we have a set of edges that covers at least $1/9$ fraction of the Halo-families with cost at most $4\sum_{e\in E_+}c_ex_e$, i.e., with the same guarantee as desired in \Cref{lem:partial-cover}.

\section{Bad Example for Grandoni-Laekhanukit Tree-Embedding Approach}
\label{app:bad-example}

In \cite{GrandoniL17}, Grandoni and Laekhanukit proposed an approximation scheme for $k$-DST based on the decomposition of an optimal solution into $k$ divergent arborescences \cite{GeorgiadisT16,BercziK11note}. Their approach results in the first non-trivial approximation algorithm for $2$-DST, and the algorithm achieves polylogarithmic approximation ratio in quasi-polynomial-time. Nevertheless, this technique meets a barrier as soon as $k\geq 3$ as it was shown in \cite{BercziK11note} that the decomposition of an optimal solution into $k$ divergent arborescences does not exist for general graphs when $k\geq 3$. One would hope that the decomposition is still possible for some classes of graphs, e.g., quasi-bipartite graphs. We show that, unfortunately, even for the class of quasi-bipartite graphs the divergent arborescences decomposition does not exist for $k\geq 3$. The counter example of a $3$-rooted-connected graph that has no $3$ divergent arborescences is shown in \Cref{fig:bad-example}.

\begin{figure}
    \centering
    \includegraphics[scale=0.6]{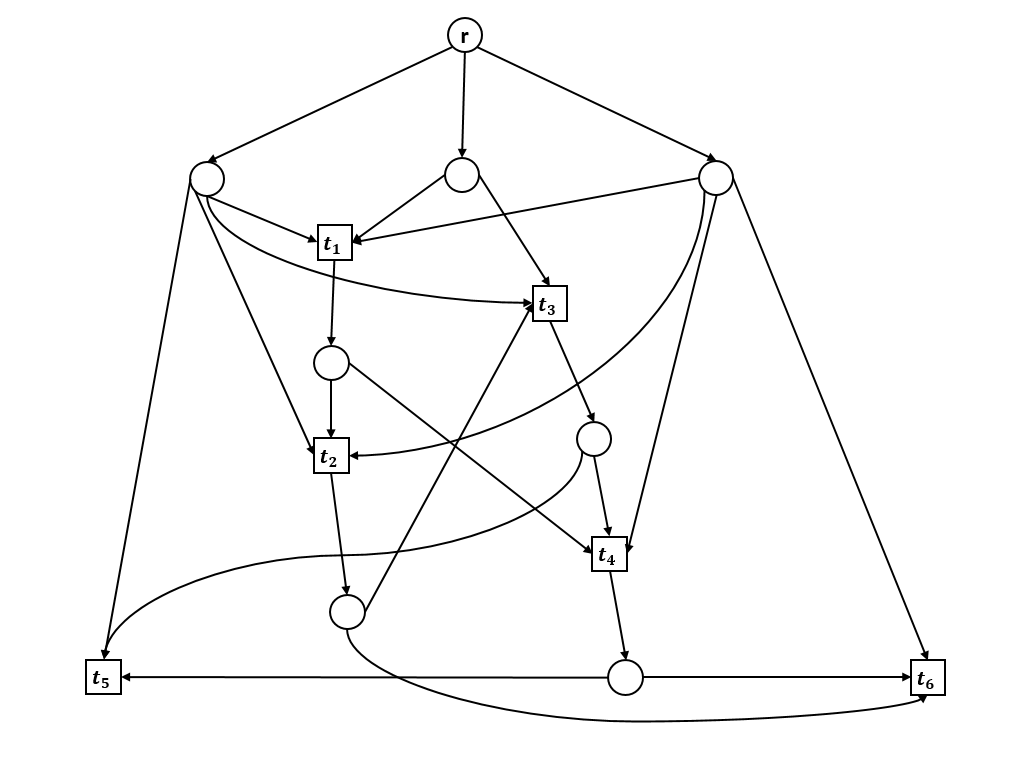}
    \caption{This figure shows an example $3$-rooted-connected quasi-bipartite graph that cannot be decomposed into $3$ divergent arborescences.}
    \label{fig:bad-example}
\end{figure}

\end{document}